\documentclass[11pt]{amsart}

\usepackage{amsmath,amsthm,amssymb}
\usepackage[margin=1in]{geometry}
\usepackage{hyperref}
\usepackage{xcolor}
\usepackage{enumerate}
\usepackage{graphicx}
\usepackage{booktabs}
\usepackage{fancyhdr}
\usepackage[colorinlistoftodos,prependcaption,textsize=tiny]{todonotes}
\usepackage{lineno}
\usepackage[foot]{amsaddr}

\newtheorem{theorem}{Theorem}
\newtheorem{lemma}[theorem]{Lemma}

\theoremstyle{definition}
\newtheorem{definition}[theorem]{Definition}
\newtheorem{remark}[theorem]{Remark}
\newtheorem{example}[theorem]{Example}
\newtheorem{fact}[theorem]{Fact}
\newtheorem{claim}[theorem]{Claim}

\newcommand{\E}{\mathbb{E}}
\renewcommand{\Pr}{\operatorname{Pr}}
\newcommand{\R}{\mathbb{R}}
\newcommand{\eps}{\varepsilon}
\newcommand{\norm}[1]{\left\lVert #1 \right\rVert}
\newcommand{\abs}[1]{\left\lvert #1 \right\rvert}

\title{{On GPT Astra's Proof of Lower Bound on Differentially Private Continual Counting}}
\author{Jalaj Upadhyay}
\address{Rutgers University}
\email{jalaj.upadhyay@rutgers.edu}
\date{\today}

\begin{document}
\maketitle
\begin{abstract}
    The goal of this note is to give a detailed proof, to the best of our understanding, of the recent presentation by Harrison and Leeman (arXiv:2609.17650v01 and arXiv:2609.17650v02) of the proof by Astra on the lower bound for differentially private continual counting. We believe a more natural and easy proof is possible and hope that this note will help in that effort. 
    
    Prior to the initial preprint by Harrison and Leeman (arXiv:2609.17650v01), Bairaktari and Larsen (arXiv:2607.00876) gave an elegant proof to show a lower bound of $\Omega(\log^{3/2}(n))$ for both pure and approximate-DP continual counting, and in personal communication had informed us that they have a proof of optimal  $\Omega(\log^{2}(n))$ for pure-differential private continual counting as well. They have subsequently published their $\Omega(\log^{2}(n))$ bound, which is now a joint work of Bairaktari, Dahl, and Larsen (arXiv:2607.00876v3). Their new result is an elegant extension of their technique for approximate-differential privacy. Although the two proofs are technically different, the Astra argument uses related tree geometry introduced in Bairaktari and Larsen.  

    \bigskip \bigskip  
    \noindent \textbf{Disclaimer.} We do not claim any intellectual merit in the proof. The note started as an attempt to better understand and work out the details of the preprint posted by Harrison and Leeman, and to disseminate it to the level accessible to a wider audience. For this reason, we have tried to make the proof self-contained and rigorous, proved many central claims that were claimed but not proved in HL26. That means that we have also used fully detailed and mostly elementary proofs of three claims in HL26 verbatim with proper accredition and explicitly mentioning them. 
\end{abstract}








\clearpage
\section{Introduction}
In a recent preprint, Harrison and Leeman~\cite{harrison2026tight} (which we call HL26 in the rest of this note) published a preprint of a proof by ChatGPT 6 Astra which resolved one of the most interesting problems in differential privacy of the error bound on the counting in the continual observation model. This note originated as a byproduct of our attempt to try to understand this result and to disseminate the result to a level accessible to a wider audience. Therefore, we have tried to keep all the proofs as rigorous as possible, proved results that are claimed and not proved in the original preprint, and verbatim included proofs of three claims from HL26 with proper attribution. For the ease of the readers, we have included remarks at various points in the note that express the best of our understanding.  

\subsection*{Notations.} We follow the notation used in Pillutla et al.~\cite{pillutla2025correlated} for vectors and matrices, and $(a)_+ = \max\{a,0\}$ for $a \in \mathbb{R}$. Other than that, we use the notation used in the preprint~\cite{harrison2026tight} to ensure easy transition for a reader (see Table~\ref{tab:notations}). 

\medskip
\noindent
\textbf{Problem Statement.} For $x\in\{0,1\}^n$, our goal is to study mechanism $\mathcal M:\{0,1\}^n\to\R^n$ that outputs an approximation
to the prefix-sum vector $A_nx$, where
\[
(A_nx)[t]:=\sum_{i=1}^t x[i], \qquad t\in[n].
\]

Following the standards in the literature, we use the following error metric~\cite{pillutla2025correlated}. 
\begin{align*}
    \alpha_\infty(\mathcal M) &:=\max_{x\in\{0,1\}^n} \E_{\mathcal M}\!\left[\norm{\mathcal M(x)-A_nx}_\infty\right],\\ 
    \mathrm{MeanSE}(\mathcal M) &:=\max_{x\in\{0,1\}^n} \E_{\mathcal M}\!\left[\frac1n\norm{\mathcal M(x)-A_nx}_2^2\right],\\
    \mathrm{MaxSE}(\mathcal M) &:=\max_{x\in\{0,1\}^n} \max_{1 \leq t \leq n}\E_{\mathcal M}\!\left[ \left({\mathcal M(x)[t] - (A_nx)[t]}\right)^2\right].
\end{align*}

It is easy to see that $\mathrm{MaxSE}(\mathcal M)\ge \mathrm{MeanSE}(\mathcal M)$ and $\mathrm{MaxSE}(\mathcal M)\ge \alpha_\infty(\mathcal M)^2.$

\section{Proof overview and the roles of the main objects}
\label{ssec:proof-overview}
The preprint posted by Harrison and Leeman~\cite{harrison2026tight} (which we call HL26 in the rest of this note) of the Astra proof uses four related objects: \textit{hard} instance defined by a hidden location, a \emph{decoder}, a pair of \emph{probe--potential functions}, and an exponential \emph{score}.  We first explain their roles at a high level along with a very high level overview of the proof, and delegate their formal definitions in the subsequent subsections.

In differential privacy, the basic goal is to ensure that given two ``neighboring" stream and prefix sum computed over one of them, the adversary cannot tell which stream was used for generating the prefix sum. In the context of continual counting, two streams are neighboring if they differ in one time epoch. So, in order to prove a lower bound, we want to roughly say that, if the algorithm is highly accurate, then an adversary can find at which time epoch the neighboring stream are different. In other word, central to any lower bound proof is the construction of a hard instance. 

\medskip 
\noindent
\textbf{Hard Instance.} 
The Astra proof as presented in the preprint of Harrison and Leeman~\cite{harrison2026tight} construct the hard instance as follows:
\begin{enumerate}
    \item we sample a binary vector $X\in\{0,1\}^m$ and an independent index $J\in[m]$, both uniformly at random\footnote{Using random instance as a hard instance was an idea first used in Andersson et al.~\cite{andersson2024count}.}.  The proof then flips the $J$-th bit and obtains $X' := X\oplus e_J,$ where $e_J$ is the $J$-th standard basis vector. 
    \item Each bit is repeated $k$ times using an expansion operator $E_k:\{0,1\}^m \to \{0,1\}^n$, so the expanded streams $E_k(X)$ and $E_k(X')$ corresponding to $X$ and $X'$ differ exactly on the $J$-th coordinate block.
\end{enumerate}

The goal of the adversary to find the index $J$. 
\begin{remark}
    [Hard Instance and Its Relation to Previous Works]
    Andersson et al.~\cite{andersson2024count} were the first to use random bit stream in their first lower bound on approximate-DP under MaxSE error metric and using the expansion operator along with group privacy was an idea used in Dwork et al.~\cite{dwork2010differentially} for the first lower bound in pure-DP. In some sense, the hard instance in this paper uses the idea from both of these works. The proof of BL26 randomizes a leaf index for an averaging argument, but does not use a random leaf in the manner similar to this proof. To the best of our understanding, the main key innovative ideas is to pick the \textit{hidden location} also uniformly at random.
\end{remark}



\medskip
\noindent 
\textbf{Decoder and Probe Function.} 
Unlike Bairaktari and Larsen \cite{bairaktari2026binary} (which we call BL26 in the rest of this note), Astra works with $4$-ary fully balanced tree, so instead of binary choice as in BL26, Astra has to identify for every internal node $u$ of the tree, which of the edges are more ``useful".  For this, they define a \textit{decoder} algorithm and a \textit{probe} function. The decoder labels two of the four outgoing edges as \emph{favored}: one edge from the even-indexed child and one edge from the odd-indexed child.  This labeling process is a local decision in the sense that it does not depend on any other node, and its goal is to find which child of each parity is more likely to contain the hidden leaf $J$.  The decision is made by comparing short summaries of the residual on selected child intervals,   which are called the \emph{probes}. We let the notation $\Phi$ define the class of probe functions and by $\phi$ a particular instantiation of the probe function.

\medskip
\noindent 
\textbf{Potential Function.}
The local decisions does not itself give us a lower bound on the mechanism's  error mainly because all three error metrics depend on every coordinate (leaves in our case), while decoder and probe function only make local decisions. We therefore pair each probe with a corresponding nonnegative \emph{potential function}.  At a high level, the potential measures the ``spread" of the error on a tree interval. The (probe, potential) pair is designed in a manner so that a wrong local decision can be charged to a decrease of the potential from a parent interval to its children.  Just like in optimization literature, in this case as well, the decrease in potential function telescope over the tree.  Therefore, the expected number of edges that are not favored on the root-to-$J$ path is bounded by the potential of the root, which in turn is bounded by the error of the mechanism.

\medskip
\noindent 
\textbf{Score Function.}
Finally, the proof introduces a hypothetical object called the \textit{score function}, which, at a high level, records partial success at every level of the $4$-ary tree. 
For a parameter $\tau>0$, the proof defines a \textit{score} function, 
\[
S : \Phi \times \{1,\cdots, m\} \times \mathbb R \times \mathbb R \to \mathbb R_+
\]

The score function is chosen in a manner such that its expected value lower bounds the privacy loss on the expanded stream and upper bounds a certain function of the achievable accuracy. Comparing the two gets us the desired lower bound. 

Jumping ahead, in this paper, we pick the score function to be the {\em exponential function} for both the probe functionsr, and the last variable is picked to be 
\begin{align}
    \label{eq:tau-defn}
    \tau:=\Theta \left( \min \left\{ 1, \frac{\log(\varepsilon/\delta)}{\log(n\varepsilon)} \right\}  \right) 
\end{align}
for some absolute constant $C>0$. Here, $\varepsilon,\delta$ are the privacy parameter. So, for the ease of presentation and to be consistent with the notation in HL26, we use the notation $S^\phi(\cdot, \cdot)$ for the score function. 


\subsection*{High Level Proof Overview.} 
Now that we have all the ingredients, we give a high level proof overview. Let $W:=\mathcal M(E_k(X))\in\R^n$ 
be the mechanism output on the original expanded stream.  We call
\begin{align}
   \label{eq:normalized-error}
   Z:=\frac{(W-A_nE_k(X))_{[q]}}{k}\in\R^q
\end{align}
the \emph{normalized error} or \emph{unshifted difference}. 
Here, consistent with HL26, we use $v_{[q]}:=(v[1],\ldots,v[q])$ to denote the restriction of vector $v$ to the set $\{1, \cdots, q\}$. To be consistent with HL26, we call $q$ the \textit{active prefix}\footnote{This is implicit in HL26.} (which for readers familiar to continual counting means \textit{current time epoch}).

Let 
\[
R:=\norm{W-A_nE_k(X)}_\infty.
\]
denote the error on the expanded input. 
Then $\|Z\|_\infty\le R/k$. Note that it is not necessarily equal because $Z$ uses only the
first $q$ coordinates, whereas $R$ is defined over all $n$ coordinates.

Analogous to the normalized error, we define the \emph{shifted residual}
\begin{align}
   \label{eq:shifted-residual}
   Y:=\frac{(W-A_nE_k(X'))_{[q]}}{k}\in\R^q.
\end{align}

Let $N^\phi(j,y)$ denote the number of edges not favored by the decoder algorithm on the path from the root node to  the $j$-node. Let $\phi$ be the probe function used by the decoder and let $y$ be the shifted residual.  For a parameter $\tau>0$, we define the \textit{exponential score} as follows:
\begin{align}
    \label{eq:overview-score}
    S^\phi(j,y):=\exp\bigl(-\tau N^\phi(j,y)\bigr).
\end{align}

As mentioned earlier, the proof relies on two lemmas (Lemma~\ref{lem:accuracy} and Lemma~\ref{lem:privacy}) that sandwich the expected value of the score function, followed by an application of Jensen inequality. In particular, in Lemma~\ref{lem:accuracy}, we show that 
\[
\E_{X,J,\mathcal M}[S^{\phi}(J,Y)] \ge \E_{X, \mathcal M}\!\left[\exp\left(-\Theta\left(\tau \cdot \mathsf{err}\over k\right) \right) \right],
\]
where $\mathsf{err}$ is either MaxSE or MeanSE error and $\tau$ is as defined in equation~\eqref{eq:tau-defn}. Now using Jensen's inequality, we lower bound the RHS in the above equation as follows:  
\[
\E_{X, \mathcal M}\!\left[\exp\left(-\Theta\left(\tau \cdot \mathsf{err}\over k\right) \right) \right] \geq \exp\left(- \Theta\left( {\tau \over k} \right) \E_{X, \mathcal M} [\mathsf{err}]\right) 
\]

This gives us the expected error (MaxSE or MeanSE). In total, these two equations gives us
\[
 \exp\left(- \Theta\left( {\tau \over k} \right) \E_{X, \mathcal M} [\mathsf{err}]\right)  \leq \E_{X,J,\mathcal M}[S^{\phi}(J,Y)].
\]

Then in Lemma~\ref{lem:privacy}, we show that 
\[
  \E_{X,J, \mathcal M}[S^\phi(J,Y)]
  \;\le\;
  e^{k\varepsilon}\!
  \left(\frac{1+e^{-\tau}}{2}\right)^{\!H}
  +\,\delta\,\frac{e^{k\varepsilon}-1}{e^\varepsilon-1}.
\]

Combining the last two equations and setting $k=\Theta(\log(n/\varepsilon))$ and $H=\Theta(\log(n))$, we have the claimed lower bound after some arthematic. 

\begin{remark}
    [Importance of Score Function]
    We note that the probe function and the potential function are the most natural function one can imagine for the corresponding error metric. Therefore, the most innovative part of the proof lies in defining the score function. 
\end{remark}

Throughout this note, we stay consistent with the notations used in HL26 even though it means we overload certain notation. This is to make an seamless transition from the current version of HL26 and this note. The following table contains all the notation that we need in our description and used in HL26. 
\begin{table}[ht]
\centering
\renewcommand{\arraystretch}{1.3}
\begin{tabular}{c|l}
\toprule
\textbf{Notation} & \textbf{Meaning} \\
\midrule
$n$ & length of the original stream \\
$H$ & height of the complete $4$-ary tree \\
$m$ & number of leaves and blocks; $m=4^H$ \\
$k$ & number of repeated coordinates in each block \\
$q$ & length of the active prefix; $q=km\le n$ \\
$E_k$ & expansion operator that repeats each bit $k$ times \\
$I_j$ & $j$-th stream-coordinate block, $\{(j-1)k+1,\ldots,jk\}$ \\
$A_n$ & prefix-sum matrix; $(A_nv)[t]=\sum_{i\le t}v[i]$ \\
$X$ & uniform random vector in $\{0,1\}^m$ \\
$J$ & uniform random target leaf in $[m]$, independent of $X$ \\
$X'$ & flipped vector $X\oplus e_J$ \\
$W$ & mechanism output $\mathcal M(E_k(X))$ \\
$Z$ & normalized error on the original stream \\
$Y$ & shifted residual formed using the flipped stream \\
$Y'$ & reference residual from a fresh run on the flipped stream \\
$s$ & sign of the flipped bit; $s=2X[J]-1\in\{-1,+1\}$ \\
$\phi_u$ & probe of a residual on the stream interval associated with node $u$ \\
$P_u$ & potential paired with $\phi_u$ \\
$N^\phi(j,y)$ & number of non-favored edges on the root-to-$j$ path \\
$S^\phi(j,y)$ & exponential score $e^{-\tau N^\phi(j,y)}$ \\
\bottomrule
\end{tabular}
\vspace{5mm}
\caption{Notation used in the hard instance and the scoring argument in the preprint~\cite{harrison2026tight}. We use the same notation in this note to make it easy for the readers.}
\label{tab:notations}
\end{table}

\subsection*{Organization of the note.}
The rest of the note is organized as follows. In Section~\ref{sec:prelims}, we give the basic preliminaries which includes differntial privacy, detailed description of hard instance, probe and potential functions, decoder algorithm, and score function along with some of our understanding of these choices. The proof in HL26 relies on proving an upper bound on the privacy and a lower bound on accuracy. We give the detail self-contained proof of the accuracy bound result in Section~\ref{sec:main}, that for privacy bound in Section~\ref{sec:privacy-upper}, and then combine the results from these two sections to give the final lower bound proofs in Section~\ref{ssec:combining}. We conclude this note by some discussion on various aspects of this paper in Section~\ref{sec:discussion} based on our persoanl communication and with the authors of HL26 after the first preprint of this note was circulated.

\section{Preliminaries}
\label{sec:prelims}
\subsection{Differential privacy}
\label{ssec:group-amplify}
 For vectors $x,x'\in\{0,1\}^n$, we denote their Hamming distance by
\[
 d_H(x,x'):=\abs{\{t\in[n]:x[t]\ne x'[t]\}}.
\]
A pair of streams is \emph{neighboring} if its Hamming distance is one. We are now ready to define differential privacy.

\begin{definition}[Differential privacy]
    A randomized mechanism $\mathcal M:\{0,1\}^n\to\mathcal Y$ is $(\eps,\delta)$-differentially private if, for all neighboring streams $x,x'$ and all measurable $A\subseteq\mathcal Y$,
    \[
    \Pr[\mathcal M(x)\in A]\le e^\eps\Pr[\mathcal M(x')\in A]+\delta.
    \]
    When $\delta=0$, we call that the mechanism satisfies $\varepsilon$-differentially private.
\end{definition}

The block expansion operator $E_k(\cdot)$ defined in equation~\eqref{eq:block-expansion-op} amplifies the hidden change caused by flipping a random bit, but it also increases the privacy cost.  In particular, by flipping coordinate $J$ of $X$, we get changes in $k$ coordinates of the expanded stream. While this plays an important role in proving a lower bound on the accuracy (Lemma~\ref{lem:accuracy}), it makes 
$E_k(X)$ and $E_k(X')$ at a Hamming distance $k$. To cover this difference in the neighboring stream, we recall the bound on \textit{group privacy}. 

\begin{lemma}[Group privacy~\cite{dwork2014algorithmic}]
    If $\mathcal M$ is $(\eps,\delta)$-differentially private and $d_H(x,x')\le k$, then every measurable $f:\mathcal Y\to[0,1]$ satisfies
    \[
    \E[f(\mathcal M(x))] \le e^{k\eps}\E[f(\mathcal M(x'))]+\delta_k,
    \qquad
    \delta_k:=\delta\sum_{i=0}^{k-1}e^{i\eps} =\delta\frac{e^{k\eps}-1}{e^\eps-1}.
    \]
\end{lemma}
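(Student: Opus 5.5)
The plan is to prove the group privacy statement by induction along a path of neighboring streams, first for sets and then for $[0,1]$-valued test functions.

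Since $d_H(x,x')\le k$, we can fix a chain $x=x_0,x_1,\ldots,x_{k'}=x'$ with $k'=d_H(x,x')\le k$, where consecutive streams differ in exactly one coordinate. We flip the differing coordinates one at a time, and if $k'<k$ we simply note that the target bound is monotone in $k$.

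The first step is to extend the definition of differential privacy from indicator functions to arbitrary measurable $f:\mathcal Y\to[0,1]$. Use the layer-cake representation $\E[f(\mathcal M(x))]=\int_0^1 \Pr[f(\mathcal M(x))>s]\,ds$. Apply the $(\eps,\delta)$-DP inequality to each level set $\{y:f(y)>s\}$, which is measurable, and integrate over $s\in[0,1]$. This gives, for neighbors $x,\tilde x$,
\[
\E[f(\mathcal M(x))]\le e^{\eps}\E[f(\mathcal M(\tilde x))]+\delta .
\]

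The second step is an induction on $i$ showing
\[
\E[f(\mathcal M(x_0))]\le e^{i\eps}\E[f(\mathcal M(x_i))]+\delta\sum_{r=0}^{i-1}e^{r\eps}.
\]
The base case $i=0$ is trivial. For the inductive step, apply the one-step bound to the neighbors $x_i,x_{i+1}$ and substitute it into the hypothesis:
\[
e^{i\eps}\bigl(e^{\eps}\E[f(\mathcal M(x_{i+1}))]+\delta\bigr)+\delta\sum_{r=0}^{i-1}e^{r\eps}
= e^{(i+1)\eps}\E[f(\mathcal M(x_{i+1}))]+\delta\sum_{r=0}^{i}e^{r\eps}.
\]
Taking $i=k'$ gives the claim with $k'$ in place of $k$.

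The last step handles $k'<k$. Since $\E[f]\ge 0$ and $e^{r\eps}\ge 0$, replacing $k'$ by $k$ only increases the right-hand side. The geometric sum evaluates to $(e^{k\eps}-1)/(e^\eps-1)$ when $\eps>0$; when $\eps=0$ it is read as $k$.

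The only mildly delicate point is the first step, passing from sets to functions. The layer-cake integration handles it cleanly, because the additive $\delta$ integrates to $\delta\cdot 1$ over $s\in[0,1]$. This is exactly where the restriction of $f$ to $[0,1]$ is used.
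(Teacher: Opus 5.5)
Your proof is correct and follows the standard chaining argument along a path of neighboring streams, which is how the cited Dwork--Roth result is proved; the paper itself gives no proof of this lemma and only cites it. Your layer-cake step, which passes from sets to $[0,1]$-valued test functions, and your induction tracking the exact additive term $\delta\sum_{i=0}^{k-1}e^{i\eps}$ are precisely the details the citation leaves implicit.
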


\subsection{The hard instance and the shifted residual}
\label{ssec:instance}
The Astra's proof for the lower bounds for all the error metrics uses the   the same hard distribution with the difference in the proof for both the metric  lying in the choice of probe-potential pair defined in Section~\ref{ssec:probes}. To explain their hard instance, we first need to describe the expansion operator. 

\medskip \noindent
\textbf{Block expansion.}
Fix positive integers $H$ and $k$, set $m:=4^H$, and let $q:=km\le n$.  Define
$E_k:\{0,1\}^m\to\{0,1\}^n$ by
\begin{align}
\label{eq:block-expansion-op}    
  E_k(x)[t] :=
  \begin{cases}
     x[\lceil t/k\rceil], & 1\le t\le q,\\
     0, & q<t\le n.
  \end{cases}
\end{align}

For $j\in[m]$, define the $j$-th block $I_j:=\{(j-1)k+1,\ldots,jk\}$ to be the $k$ contiguous indices. Therefore, $E_k(x)$ is constant on each block $I_j$, with value $x[j]$, and the
blocks partition the currently active prefix $\{1, \cdots, q\}$. For $t\in I_j$, the prefix sum of the \textit{expanded stream} is
\begin{align}
  \label{eq:prefix-sum-block}
  (A_nE_k(x))[t]
  =k\sum_{i=1}^{j-1}x[i]
   +\bigl(t-(j-1)k\bigr)x[j].
\end{align}

At the block boundary $t=jk$ for some $j \in \mathbb N$, equation~\eqref{eq:prefix-sum-block} is 
$k\sum_{i=1}^j x[i]$.  Therefore, flipping the bit $x[j]$  changes the prefix sums gradually
within $I_j$ and by a constant amount $k$ at every coordinate after $I_j$. 

\begin{example}
    Let $e_1, \cdots, e_{m}$ be standard basis of $\mathbb R^{m}$ for $m=n/\log(n)$. Then $E_{\log(n)}(e_i)$ are the hard instances considerd in the lower bound proof in Dwork et al.~\cite{dwork2010differentially}.
\end{example}

\medskip \noindent
\textbf{The random input pair.}
Sample
\[
X\sim\mathrm{Uniform}(\{0,1\}^m),
\qquad
J\sim\mathrm{Uniform}([m]),
\]
independently, and set $X':=X\oplus e_J.$ If we do not have the random indices $J$ on which the input changes, this is the same hard instance used in Andersson et al.~\cite{andersson2024count}. 

The hard instance in this proof is the expanded streams $E_k(X)$ and $E_k(X')$. It is easy to see that they are identical outside $I_J$ and differ at every coordinate of $I_J$.  Therefore $d_H(E_k(X),E_k(X'))=k.$ We will be working with the expanded stream, so the hard instance can be seen as using the idea for two previous works~\cite{andersson2024count, dwork2010differentially} that proved $\log(n)$ lower bound on pure and approximate DP, respectively. 

The following fact is immediate from the description of $X'$ and $J$. 
\begin{enumerate}
    \item $X'$ is uniform on $\{0,1\}^m$ and independent of $J$.
    \item The sign $s:=2X[J]-1$ is uniform on $\{-1,+1\}$.
\end{enumerate}

In the high-level overview of the proof, we defined normalized and shifted residual. A natural question is why do we need these two definitions. We next discuss why the proof introduces this concept to the best of our understanding. 

To get some idea, let us try to understand the relationship between the normalized error $Z$ from the shifted residual $Y$ defined in equations~\eqref{eq:normalized-error} and \eqref{eq:shifted-residual}, respectively. Since the target index $J$ is sampled independently of $X$,  $W=\mathcal M(E_k(X))$ does not depend on $J$, and the normalized error vector $Z$ does not contain the index $J$. On the other hand, we can show that the random vector $Y$  satisfies $Y=Z+g_J,$  where 
\[ 
g_J[t] =\frac{s}{k} \min\bigl\{(t-(J-1)k)_+,k\bigr\}, \qquad s=2X[J]-1\in\{-1,+1\},
\] 

Note that $g_J$ is zero before block $I_J$, changes linearly from $0$ to $s$ inside $I_J$, and equals the constant $s$ after $I_J$ (note that the adversary does not know the value of $J$).  Therefore, $Y$ is obtained by adding to the normalized error $Z$, a step-shaped vector $g_J$ whose form is fixed but whose index and whehter it is $+1$ or $-1$ are hidden. 
 
We make this formal through the following lemma and then discuss its importance in the context of the proof.

\begin{lemma}[Residual decomposition]\label{lem:residual-decomposition}
    For $a\in\R$, write $(a)_+:=\max\{a,0\}$.  For every $t\in[q]$,
    \begin{align}
        \label{eq:residual-decomp}
        Y[t]-Z[t] =\frac{s}{k}\min\bigl\{(t-(J-1)k)_+,k\bigr\},
        \qquad \text{where} \qquad s=2X[J]-1.
    \end{align}
\end{lemma}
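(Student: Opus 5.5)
The plan is a direct computation from the definitions in equations~\eqref{eq:normalized-error} and \eqref{eq:shifted-residual}. Subtracting the two definitions, the mechanism output $W$ cancels:
\[
Y[t]-Z[t]=\frac{(A_nE_k(X))[t]-(A_nE_k(X'))[t]}{k}=\frac{1}{k}\bigl(A_n(E_k(X)-E_k(X'))\bigr)[t],
\]
using linearity of $A_n$. So the statement reduces to computing the prefix sums of the difference vector $E_k(X)-E_k(X')$, which involves no probability at all. The identity will hold pointwise for every realization of $X$, $J$, and the mechanism's randomness.

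Next I identify $E_k(X)-E_k(X')$ from the definition~\eqref{eq:block-expansion-op}. Since $X'=X\oplus e_J$, the two expanded streams agree outside the block $I_J$. On $I_J$ every coordinate of the difference equals $X[J]-X'[J]=X[J]-(1-X[J])=2X[J]-1=s$. Hence $E_k(X)-E_k(X')=s\,\mathbf 1_{I_J}$, and
\[
\bigl(A_n(s\mathbf 1_{I_J})\bigr)[t]=s\,\abs{\{i\le t: i\in I_J\}}=s\,\abs{[1,t]\cap\{(J-1)k+1,\ldots,Jk\}}.
\]
Alternatively, I could read this off equation~\eqref{eq:prefix-sum-block} directly, but the indicator form is cleaner.

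The only step needing care is checking that this count equals $\min\{(t-(J-1)k)_+,k\}$. I will split into three cases:
\begin{itemize}
\item If $t\le (J-1)k$, the intersection is empty, and $(t-(J-1)k)_+=0$.
\item If $(J-1)k<t\le Jk$, the intersection is $\{(J-1)k+1,\ldots,t\}$. It has $t-(J-1)k\in(0,k]$ elements, which equals the minimum.
\item If $t>Jk$, the whole block lies in $[1,t]$, giving $k$, and $t-(J-1)k>k$, so the minimum is $k$.
\end{itemize}
Dividing by $k$ gives~\eqref{eq:residual-decomp}.

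I do not expect a real obstacle here. The main thing to watch is the sign convention: $Y$ uses $X'$, so the difference is $A_nE_k(X)-A_nE_k(X')$, and the sign $s=2X[J]-1$ comes out correctly. I will double-check this against the case $X[J]=1$, where the original stream has the larger prefix sums and so $Y\ge Z$ coordinatewise.
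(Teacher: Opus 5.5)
Your proposal is correct and follows the paper's proof essentially step for step: $W$ cancels, the difference becomes the prefix sums of $E_k(X)-E_k(X')=s\,\mathbf 1_{I_J}$, and a three-case count of $|I_J\cap[t]|$ finishes the argument. The only difference is where you put the boundary point $t=Jk$ (in your middle case rather than the last case, as the paper does), which is immaterial since both formulas give $k$ there.
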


\begin{proof}
    We first have 
    \begin{align}
        Y[t]-Z[t]
        &=\frac1k\left((A_nE_k(X))[t]-(A_nE_k(X'))[t]\right) =\frac1k\sum_{i=1}^t\left(E_k(X)[i]-E_k(X')[i]\right).
        \label{eq:yz-diff}
    \end{align}

    Since $X$ and $X'$ differ only at $J$, we have $E_k(X)[i]-E_k(X')[i] =s\,\mathbf 1\{i\in I_J\}.$ Substituting it into equation~\eqref{eq:yz-diff}, we get 
    \[
    Y[t]-Z[t] =\frac{s}{k}|I_J\cap[t]|.
    \]
    Finally, it is easy to see that 
    \begin{align}
        \label{eq:case-residual}    
         |I_J\cap[t]| =
         \begin{cases}
            0, & t\le (J-1)k,\\
            t-(J-1)k, & (J-1)k<t<Jk,\\
            k, & t\ge Jk,
        \end{cases}
    \end{align}
    which is exactly $\min\{(t-(J-1)k)_+,k\}$.
\end{proof}

Now as promised, we return to the question as to why is this lemma important or what it is conveying? The case form in equation~\eqref{eq:case-residual} of the residual decomposition lemma gives three regime of the decomposition. 
\begin{itemize}
    \item \emph{Before the transition block:} if $t\le(J-1)k$, then $Y[t]=Z[t]$.

    \item \emph{Inside the transition block:} if $(J-1)k<t<Jk$, then
    \[
    Y[t]-Z[t]=\frac{s}{k}(t-(J-1)k).
    \]
    We call this affine change from $0$ to $s$ the \emph{ramp}.

    \item \emph{After the transition block:} if $t\ge Jk$, then $Y[t]=Z[t]+s$.  We call the constant offset $s$, whose magnitude is one, 
  the \emph{unit step}.
\end{itemize}

Therefore, as mentioned earlier, $Y$ is the normalized error $Z$ plus a ``\textit{ramp}" on $I_J$ followed by a constant unit step defined by $s \in \{-1,1\}$. Note tgat the entire goal of the decoder is to find this transition point from $Y$.  We will see later in Section~\ref{ssec:cross} that the \textit{cross-parity} rule, that makes a decision whether a child of an internal node is ``useful" in finding the hidden index $J$ or not, ensures that a decision about any edge from the root note to the $J$-th leaf never probes the interval of the child that contains $I_J$. As a result, the local comparison only sees the regions on which the inserted
signal is constant, and not the ramp. 

\subsection{The $4$-ary tree}
\label{ssec:cross}
BL26 uses the more natural approach of binary tree basis. However, unlike BL26,  the proof of Astra uses a $4$-ary tree. In the context of binary input, this  seems very unnatural construction, and  unfortunately, the current version does not provide any explanation or intuition. We next try to understand the edge-labeling rule used by the Astra decoder and develop some intuition as to why $4$-ary is used in the proof. We return to this question again at the end of this section.

\medskip \noindent
\textbf{Tree structure.}
Let $\mathcal{T}$ be the complete ordered $4$-ary tree of height $H$ with leaf set $[m]$, where $m=4^H$.  We identify each node with the contiguous interval of leaf indices below it.  In other words, the root is $[1,m]$, and a node at depth $d$ contains $4^{H-d}$ leaves.  If an internal node is
\[
u=[a,b], \qquad L:=|u|=b-a+1,
\]
then $L$ is divisible by $4$, and the four children of $u$, listed from left
to right, are
\begin{align*}
    u_0&=[a,\,a+L/4-1],\\
    u_1&=[a+L/4,\,a+L/2-1],\\
    u_2&=[a+L/2,\,a+3L/4-1],\\
    u_3&=[a+3L/4,\,b].
\end{align*}

We call $u_0,u_2$ the \emph{even children} and $u_1,u_3$ the \emph{odd children}.  For any node $v=[a,b]$, recall that
\[
I_v:=\bigcup_{j=a}^b I_j =\{(a-1)k+1,\ldots,bk\} \subseteq \{1, \cdots, q\}
\]
is the associated interval of stream coordinates.  Therefore, if the target leaf $J$ lies in child $u_i$, then the transition block $I_J \subseteq I_{u_i}$ and is disjoint from the other three child intervals.

\subsection{Probe and potential functions}
\label{ssec:probes}
For a tree node $u=[a,b]\subseteq[m]$, let
\[
I_u:=\bigcup_{j=a}^b I_j =\{(a-1)k+1,\ldots,bk\} \subseteq \{1, \cdots, q\}
\]
be the corresponding interval of stream coordinates.  The Astra decoder makes a local decision at $u$ by comparing scalar summaries of the residual on selected child intervals.  We call such a summary a \emph{probe}. Since the probe make local decision and our goal is to understand a global property of the mechanism, we need to convert the information provided by probe to a more global informtation. For this, every probe has an associated \textit{potential function}, which measures the variation of the residual on an interval. In this note, we use the following probe-potential function. In retrospect, these are the most natural pair for the corresponding error metrics. 

\begin{definition}[Probe--potential pairs]\label{def:probe}
    For $v\in\mathbb{R}^d$ and a nonempty set $A\subseteq[d]$, define
    \begin{align*}
        c_A(v) 
        &:=\frac12\left(\max_{i\in A}v[i]+\min_{i\in A}v[i]\right),
        & r_A(v)
        &:=\max_{i\in A}v[i]-\min_{i\in A}v[i],\\[3pt] \mu_A(v)
        &:=\frac1{|A|}\sum_{i\in A}v[i],
        & V_A(v)
        &:=\frac1{|A|}\sum_{i\in A}\bigl(v[i]-\mu_A(v)\bigr)^2.
    \end{align*}

    The pairs $(c,r)$ and $(\mu,V)$ are called the \emph{midrange--range} and \emph{mean--variance} pairs, respectively.  For a tree node $u$, we use the notation 
    \[
    \phi_u(v):=\phi_{I_u}(v), \qquad P_u(v):=P_{I_u}(v),
    \]
    where $(\phi,P)$ is either $(c,r)$ or $(\mu,V)$.
\end{definition}

The midrange--range pair is used for the MaxSE lower bound, while the mean--variance pair is used for the MeanSE lower bound.  Both pairs satisfy certain important properties that allows us to relate the shifted residual (defined in equation~\eqref{eq:shifted-residual}) with the error of any DP mechanism. We next describe these properties:
\begin{fact}
    [Translation equivariant of probe function and Translation invariance of potential function]
    For every constant $a\in\mathbb{R}$,
    \begin{align}
        \label{eq:translation}
        \phi_A(v+a\mathbf{1})=\phi_A(v)+a,
        \qquad
        P_A(v+a\mathbf{1})=P_A(v),
    \end{align}
    where $\mathbf{1}$ is the all-ones vector on $A$.  Thus the probes are
    \textit{translation equivariant} and the potentials are \textit{translation invariant}.
\end{fact}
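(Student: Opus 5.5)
The plan is to verify the two identities in \eqref{eq:translation} directly from Definition~\ref{def:probe}, separately for each pair $(c,r)$ and $(\mu,V)$. The single observation driving everything is that adding a constant $a$ to every coordinate of $v$ on $A$ commutes with taking maxima, minima and averages over $A$. That is,
\[
\max_{i\in A}(v[i]+a)=\max_{i\in A}v[i]+a,
\qquad
\min_{i\in A}(v[i]+a)=\min_{i\in A}v[i]+a,
\qquad
\mu_A(v+a\mathbf{1})=\mu_A(v)+a.
\]
The first two hold because $x\mapsto x+a$ is strictly increasing, so it preserves the order among the values $\{v[i]\}_{i\in A}$. The third is linearity of the average together with $\frac1{|A|}\sum_{i\in A}a=a$; here we use that $A$ is nonempty.

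For the midrange--range pair, I will substitute these into the definitions. This gives $c_A(v+a\mathbf 1)=\tfrac12\bigl(\max_{i\in A} v[i]+a+\min_{i\in A} v[i]+a\bigr)=c_A(v)+a$, and in $r_A(v+a\mathbf 1)$ the two shifts by $a$ cancel, leaving $r_A(v)$.

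For the mean--variance pair, equivariance of $\mu_A$ is the third display above. For $V_A$, each centered term satisfies $(v[i]+a)-\mu_A(v+a\mathbf 1)=v[i]-\mu_A(v)$. So the summands of $V_A(v+a\mathbf 1)$ coincide termwise with those of $V_A(v)$, and $V_A$ is invariant.

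Finally, I will note that the node-level versions $\phi_u=\phi_{I_u}$ and $P_u=P_{I_u}$ inherit the same properties by taking $A=I_u$. Moreover, only the coordinates of $\mathbf 1$ in $A$ matter, so the identities also hold when a constant is added on a superset of $A$. This is the form later needed when the unit step $s$ from Lemma~\ref{lem:residual-decomposition} is constant on a probed child interval. There is no real obstacle. The only points requiring care are that the shift is applied uniformly to all of $A$, and that $A\neq\emptyset$ so that the max, min and mean are well defined.
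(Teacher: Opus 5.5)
Your proposal is correct and matches the paper. The paper gives no written proof of this Fact; it treats it as immediate from Definition~\ref{def:probe}, and your direct check (max, min and mean commute with adding $a$, so $c_A,\mu_A$ shift by $a$ while $r_A,V_A$ are unchanged) is exactly that routine verification. Your remark that $\phi_A$ and $P_A$ depend only on the coordinates in $A$ is also the form in which the paper uses the fact in Claim~\ref{claim:probe-decomp}, via $Y|_{I_v}=Z|_{I_v}+s\mathbf{1}$.
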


\medskip \noindent
\textbf{The midrange--range pair.} 
For the MaxSE case, the potential of the root node satisfies
\[
r_{[q]}(Z)\le 2\|Z\|_\infty\le \frac{2R}{k},
\]
where $R:=\|W-A_nE_k(X)\|_\infty$.  

The following claim, stated almost verbatim (apart from changing notations and stylist preference) from HL26 along with its proof, relates change in the midrange to a decrease in the range potential. The proof is elementary and follows from the definition of probe and potential functions.

\begin{claim}
    [{\bf Harrison and Leeman~\cite[Lemma 4]{harrison2026tight}}]
    \label{claim:midrange}
    For $v\in\mathbb{R}^d$ and nonempty sets $A'\subseteq A\subseteq[d]$,
    \[
   |c_{A'}(v)-c_A(v)|
   \le \frac{r_A(v)-r_{A'}(v)}2.
   \]
\end{claim}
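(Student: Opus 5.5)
The plan is to reduce the claim to two elementary monotonicity facts about maxima and minima over nested sets. Write
\[
M:=\max_{i\in A}v[i],\qquad m:=\min_{i\in A}v[i],\qquad M':=\max_{i\in A'}v[i],\qquad m':=\min_{i\in A'}v[i].
\]
Since $A'\subseteq A$ and $A'$ is nonempty, all four quantities are well defined, and
\[
m\le m'\le M'\le M .
\]
Setting $\alpha:=M-M'\ge 0$ and $\beta:=m'-m\ge 0$, the whole argument will be carried out in terms of these two nonnegative gaps.

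The next step is to express both sides of the claim through $\alpha$ and $\beta$. For the right-hand side,
\[
r_A(v)-r_{A'}(v)=(M-m)-(M'-m')=\alpha+\beta .
\]
For the left-hand side,
\[
c_{A'}(v)-c_A(v)=\tfrac12\bigl((M'+m')-(M+m)\bigr)=\tfrac12(\beta-\alpha).
\]
The claim is therefore equivalent to $|\beta-\alpha|\le\alpha+\beta$ for $\alpha,\beta\ge 0$. This inequality holds because $|\beta-\alpha|\le\max\{\alpha,\beta\}\le\alpha+\beta$, or equivalently by the triangle inequality combined with nonnegativity.

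There is no genuine obstacle here. The only point needing care is to verify the sign conventions, namely that $\alpha\ge0$ and $\beta\ge0$, which follows directly from nestedness and nonemptiness of $A'$. This is exactly what makes the range potential decrease when passing from a parent interval to a child interval, and why the midrange moves by at most half of that decrease.
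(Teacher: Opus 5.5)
Your proof is correct and is essentially the paper's argument. The paper observes that $[c_{A'}-r_{A'}/2,\,c_{A'}+r_{A'}/2]=[m',M']\subseteq[m,M]=[c_A-r_A/2,\,c_A+r_A/2]$ and compares endpoints. Those endpoint comparisons are exactly your facts $\beta\ge0$ and $\alpha\ge0$, and the two one-sided bounds the paper derives are the two halves of your inequality $|\beta-\alpha|\le\alpha+\beta$.
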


\begin{proof}
    The values of $v$ on $A'$ lie in the interval $[c_{A'}-r_{A'}/2,c_{A'}+r_{A'}/2]$, which is contained in $[c_A-r_A/2,c_A+r_A/2]$.  Comparing the left endpoints gives $c_{A'}-c_A\ge-(r_A-r_{A'})/2$, while comparing the right endpoints gives $c_{A'}-c_A\le(r_A-r_{A'})/2$.  The two inequalities imply the claim.
\end{proof}

\begin{remark}
    The midrange is also $1$-Lipschitz with respect to $\ell_\infty$, i.e., $|c_A(v)-c_A(w)|\le\|v-w\|_\infty.$  However, we need Claim~\ref{claim:midrange} here because it  controls a probe difference across nested intervals by a decrease in the potential.    
\end{remark}

\medskip
\noindent
\textbf{The mean--variance pair.}
For the mean-squared-error analysis, the unnormalized squared error on the active prefix is 
\[
Q:=\frac1q\sum_{t\le q} \bigl(W[t]-(A_nE_k(X))[t]\bigr)^2.
\]
Since $Z=\frac1k(W-A_nE_k(X))_{[q]}$,
\[
\frac1q\sum_{t\le q}Z[t]^2=\frac{Q}{k^2}.
\]

Therefore, the most natural probe function is the mean, and the corresponding potential is the  variance. The following lemma, stated almost verbatim (apart from changing notations and stylist preference) in Harrison and Leeman~\cite{harrison2026tight} along with its proof, plays the same role as midrange stability. The proof is elementary and follows from the definition of probe and potential functions.

\begin{lemma}
    [{\bf Harrison and Leeman~\cite[Lemma 5]{harrison2026tight}}]
    \label{lem:totvar}
    Let $A_0,\ldots,A_{B-1}$ partition a nonempty set $A$, and let $w_i:=|A_i|/|A|$.  Then
    \[
    V_A(v)-\sum_{i=0}^{B-1}w_iV_{A_i}(v) =\sum_{i=0}^{B-1}w_i\bigl(\mu_{A_i}(v)-\mu_A(v)\bigr)^2.
    \]
\end{lemma}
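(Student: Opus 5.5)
The statement is the law of total variance for a finite partition, so the plan is a direct algebraic expansion. Write $\mu:=\mu_A(v)$ and $\mu_i:=\mu_{A_i}(v)$. First I record that $\mu=\sum_i w_i\mu_i$. This holds because the $A_i$ partition $A$, so
\[
\sum_{t\in A}v[t]=\sum_i\sum_{t\in A_i}v[t]=\sum_i|A_i|\mu_i.
\]
Dividing by $|A|$ gives the identity. Also $\sum_i w_i=1$.

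Next I decompose $V_A(v)$ block by block. Since the $A_i$ partition $A$,
\[
V_A(v)=\frac1{|A|}\sum_i\sum_{t\in A_i}(v[t]-\mu)^2.
\]
Inside each block I write $v[t]-\mu=(v[t]-\mu_i)+(\mu_i-\mu)$ and expand the square, which gives three sums over $t\in A_i$:
\begin{itemize}
\item the sum $\sum_{t\in A_i}(v[t]-\mu_i)^2=|A_i|\,V_{A_i}(v)$;
\item the cross term $2(\mu_i-\mu)\sum_{t\in A_i}(v[t]-\mu_i)$, which vanishes because $\mu_i$ is the mean of $v$ on $A_i$;
\item the constant term $|A_i|(\mu_i-\mu)^2$.
\end{itemize}

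Dividing by $|A|$ and using $w_i=|A_i|/|A|$ gives
\[
V_A(v)=\sum_i w_iV_{A_i}(v)+\sum_i w_i(\mu_i-\mu)^2.
\]
Rearranging yields the claimed identity.

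There is no real obstacle; the only point needing care is the vanishing of the cross term. This requires that each $A_i$ be nonempty so that $\mu_{A_i}$ is defined. If empty parts were permitted, they would carry weight $w_i=0$ and could simply be discarded, so I will note that convention explicitly.
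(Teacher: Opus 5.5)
Your proposal is correct and follows the paper's proof essentially step for step. Both arguments write $v[t]-\mu_A(v)=(v[t]-\mu_{A_i}(v))+(\mu_{A_i}(v)-\mu_A(v))$ on each block, expand the square, drop the cross term because deviations from $\mu_{A_i}(v)$ sum to zero on $A_i$, and then sum over $i$ and divide by $|A|$. The preliminary identity $\mu_A(v)=\sum_i w_i\mu_{A_i}(v)$ that you record is never used, so you can omit it.
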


\begin{proof}
    For $j\in A_i$,
    \[
    v[j]-\mu_A(v) =\bigl(v[j]-\mu_{A_i}(v)\bigr)  +\bigl(\mu_{A_i}(v)-\mu_A(v)\bigr).
    \]
    After squaring and summing over $j\in A_i$, the cross term vanishes because $\sum_{j\in A_i}(v[j]-\mu_{A_i}(v))=0$.  Summing over $i$ and dividing by $|A|$ proves the identity.
\end{proof}


\begin{remark}
    While the mean--variance pair has no direct counterpart in the result BL26, we believe that the proof of BL26 can be extended to give a mean-squared-error bound as well.  This is because the law of total variance plays the same structural role as the recursive range inequality in BL26.  
\end{remark}

\subsection{Decoder and Cross-parity rule}
\label{ssec:cross}
Recall that the main goal of the decoder is to label two outgoing edges at every internal node as favored  in such a way that an accurate mechanism tends to favor the edge on the root-to-$J$ path. We describe the main idea behind the decoding algorithm. 

Let $\mathcal{T}_{\mathrm{int}}$ denote the internal nodes of $\mathcal{T}$.
Fix one of the probes $\phi\in\{c,\mu\}$ defined in
Section~\ref{ssec:probes} (as we discussed earlier, it does not matter which probe function we use as long as we use the corresponding potential function, so the reader can pick the probe function they are most comfortable with).  For a residual vector $y\in\mathbb{R}^q$, the decoder labels one even and one odd outgoing edge at every $u\in\mathcal{T}_{\mathrm{int}}$.

\begin{definition}
    [Cross-parity rule]
    Define the favored even child\footnote{While unfortunate and not our preference, we overload the letter $e$ to also define the even nodes even though the letter is reserved for the exponential function. This is to be consistent with the original preprint.} and odd child by
    \begin{align}
        \label{eq:favored}
        e_u^\phi(y) :=
        \begin{cases}
            u_2, & \text{if }|\phi_{u_1}(y)-\phi_{u_3}(y)|\ge\frac12,\\
            u_0, & \text{if }|\phi_{u_1}(y)-\phi_{u_3}(y)|<\frac12,
        \end{cases}
        \quad \text{and} \quad 
        o_u^\phi(y) :=
        \begin{cases}
            u_1, & \text{if }|\phi_{u_0}(y)-\phi_{u_2}(y)|\ge\frac12,\\
            u_3, & \text{if }|\phi_{u_0}(y)-\phi_{u_2}(y)|<\frac12.
        \end{cases}
    \end{align}
\end{definition}

Then we can denote the favored-edge set by 
\begin{align}
    \label{eq:favored-edges}    
    \mathcal{E}_{\mathrm{fav}}^\phi(y) :=\bigcup_{u\in\mathcal{T}_{\mathrm{int}}} \bigl\{(u,e_u^\phi(y)),(u,o_u^\phi(y))\bigr\}.
\end{align}

Therefore, two out the four outgoing edges at each internal node are favored. We use the term \emph{cross-parity} because the choice between the even children is made by comparing the odd children, while the choice between the odd children is made by comparing the even children.

\begin{remark}
    [Why cross-parity rule?]
    \label{rem:cross-parity}
    It is unclear and is not discussed (even though we feel it is very crucial in the understanding of the proof) is why the proof uses opposite-parity probe. On its own, using such a method is very unnatural. Therefore, we attempt to give some intuition based on our basic understanding as to why the proof and its usage of $4$-ary tree crucially need such a trick. 

    Recall that $Y=Z+g_J,$ where $g_J$ is zero before $I_J$, varies linearly from $0$ to $s\in\{-1,+1\}$ on $I_J$, and equals $s$ after $I_J$.  The only coordinates on which $g_J$ is not constant are those in $I_J$.  Therefore a probe can be rewritten entirely in terms of $Z$ whenever its coordinate interval is disjoint from $I_J$.

    Now suppose $J\in u_i$.  To decide whether the edge toward $J$ is favored, the cross-parity rule compares the two children of parity opposite to $i$. Neither of these two children is $u_i$.  Therefore, both probed intervals are disjoint from $I_J$ and lie entirely before or entirely after $I_J$.  On each such interval, $g_J$ is a constant, so using translation equivariance of $\phi$, we have 
    \[
    \phi_v(Y)=\phi_v(Z)+\gamma_v, 
    \qquad \gamma_v\in\{0,s\}.
    \]

    In other words, cross-parity prevents ``contamination by the ramp" as the comparison determining whether the true edge is favored never uses a probe on the child interval containing $I_J$.
\end{remark}

\subsection{Why $4$-ary tree?}
\label{ssec:why-4-ary}
Now once we have the understanding of machineries used in the proof, we go back to our initial question: \textit{why use $4$-ary tree and not any other $2^a$-ary tree and in particular binary tree as in BL26?}   The use of $4$-ary tree serves two separate purposes.
\begin{enumerate}
    \item  First, it supports the cross-parity comparison.  The four ordered children split into two parity classes of size two.  The decoder can therefore decide between the two children in one class by probing the two children in the other class.  If the target is in the class being decided, neither probed child contains the transition block.  A binary node has no disjoint opposite-parity pair available for this purpose.

    \item  Second, exactly two outgoing edges are favored at every internal node, independently of the residual.  In the reference experiment, $J$ is uniform and independent of $Y'$.  Conditioned on $Y'=y'$ and on the prefix of the root-to-$J$ path above a node $u$, the next child containing $J$ is uniform in
    $\{u_0,u_1,u_2,u_3\}$. Therefore,
    \[
    \Pr\bigl[(u,\operatorname{child}_J(u)) \notin\mathcal{E}_{\mathrm{fav}}^\phi(y') \mid Y'=y',\text{ previous path choices}\bigr] =\frac12.
    \]
\end{enumerate}

Moreover, the base-$4$ representation of a uniform leaf are independent across levels.  Therefore, the non-favored edges along the path are independent Bernoulli random variables,  $\mathrm{Bernoulli}(1/2)$, and we can compute the MGF, and hence analyze the score function easily. 

We note that it is possible using some clever accounting to even work with higher order ary-tree, but it seems to become very complicated  to analyze. In particular, looking at distribution of $N^\phi(J,Y)$, because of the 4-ary structure and that exactly 2 of the 4 outgoing edges are favored at every node, in the reference experiment, $J$ lands in each of the 4 children with probability $1/4$, so the probability of a non-favored edge at any level is exactly $1/2$, and we get the nice binomial distribution that is very easy to analyze. Such a non-favored probability is also the case in any $2^b$-ary tree (and in particular, for the binary tree). In fact, it seems like using the $4$-ary tree leads to the contamination problem, and which is what makes the proof more intricate to analyze. However, on the other hand, having a $4$-ary tree allows this proof to not have a complex two-level grandchild conditioning argument and a linear-measurement privacy attack as in BL26. These are some of the reasons why we believe that getting a better understanding of cross-parity argument is crucial for understanding this proof.

\subsection{The exponential scoring function}
\label{ssec:score}
The cross-parity decoder labels exactly two outgoing edges at each internal node as favored.  Depending on this labeling, the \textit{exponential scoring function} outputs a value that can be analyzed from both the accuracy and privacy perspectives. For a residual vector $y\in\mathbb{R}^q$, recall the favored-edge set from equation~\eqref{eq:favored-edges}, 
\[
\mathcal{E}_{\mathrm{fav}}^\phi(y)  :=\bigcup_{u\in\mathcal{T}_{\mathrm{int}}} \bigl\{(u,e_u^\phi(y)),(u,o_u^\phi(y))\bigr\}.
\]

For a leaf $j\in[m]$, let $\operatorname{path}(j)$ be the set of the $H$ parent--child edges on the unique path from the root to $j$.  Define $N^\phi(j,y) :=\bigl|\operatorname{path}(j)         \setminus\mathcal{E}_{\mathrm{fav}}^\phi(y)\bigr|.$  That is, $N^\phi(j,y)$ is the number of levels at which the edge toward $j$ is not favored by the decoder.  For a parameter $\tau>0$, define the
\emph{exponential score} to be 
\begin{align*}
     S^\phi(j,y) :=\exp\bigl(-\tau N^\phi(j,y)\bigr) \in [e^{-\tau H},1].
\end{align*}

It equals $1$ when every edge on the root-to-$j$ path is favored and is multiplied by $e^{-\tau}$ for each
non-favored edge.

\section{Accuracy lower bound}
\label{sec:main}
In this section, we build on the concepts and discussion from the previous section to give detailed proof of the lower bound.  Since our goal is to provide exposition and ensure that all the statements are easy to follow, we use the same terminilogy as used in the preprint of Astra's proof. We start with giving a self-contained proof of accuracy lower bound (Lemma 6 in the preprint). 

\begin{remark}
    The original proof suppresses two important arguments and mention them in passing. While the claims in HL26 is correct and these omission do not affects the correctness of their statement, however, we feel that they are essential for a complete rigorous proof and has some subtly that needs to be explicitly spelled out. We enumerate first for the readers who wishes to quickly find them in the current note. Other readers can simply skip this remark. 

    \begin{enumerate}
        \item \emph{Probe decomposition.}  The core computation at each node rewrites $\phi_{u_i}(Y)$ as $\phi_{u_i}(Z)$ or $\phi_{u_i}(Z)+s$ depending on whether $I_{u_i}$ lies before or after $I_J$.  This requires: (a)~the explicit formula $Y=Z+g_J$ from Lemma~\ref{lem:residual-decomposition}; (b)~the fact that the probed child interval $I_{u_i}$ is disjoint from $I_J$, which follows from the cross-parity structure of Section~\ref{ssec:cross}; and (c)~translation equivariance of the probe, which is stated in Definition~\ref{def:probe} but is never cross-referenced in the proof. We use all these steps in Claim~\ref{claim:probe-decomp}.

        \item \emph{Case analysis.}  The bound on the local error probability $p^\phi_u$ (defined in equation~\eqref{eq:p-phi-u-defn}) requires a complete derivation for each location of $J$.  The Astra preprint names the four cases and just state the bound in each of these cases. However, this calculation is more subtle, and their claim do not show why the comparison reduces to $|d_{13}|$ when $J\in u_0$, or why $|d_{02}+s|<1/2$ forces $|d_{02}|>1/2$ when $J\in u_1$. We do it more explicitly in Claim~\ref{claim:p-u-phi}.
    \end{enumerate}

\end{remark}

\begin{lemma}[Accuracy lower bound]
\label{lem:accuracy} 
    Under the distributions of Section~\ref{ssec:instance}, for any $\tau>0$, \begin{align}
        \E_{X,J,\mathcal M}[S^{\mathrm{mid}}(J,Y)]
        &\ge \E_{X, \mathcal M}\!\left[e^{-4\tau R/k}\right],    \label{eq:acc-mid}\\
        \E_{X,J,\mathcal M}[S^{\mathrm{mean}}(J,Y)]
        &\ge \E_{X, \mathcal M}\!\left[e^{-16\tau Q/k^2}\right], \label{eq:acc-mean}
    \end{align}
    where $R:=\|W-A_nE_k(X)\|_\infty$ and $Q:=\frac1q\sum_{t\le q}(W-A_nE_k(X))[t]^2$.
\end{lemma}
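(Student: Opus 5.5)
The plan is to condition on $X$ and on the mechanism's internal randomness. Then $W$, and hence $Z$, is fixed, and only $J$ is random. The target becomes the pointwise bound $\E_J[S^\phi(J,Y)]\ge e^{-4\tau R/k}$ (resp.\ $e^{-16\tau Q/k^2}$), after which we take $\E_{X,\mathcal M}$ of both sides. Since $x\mapsto e^{-\tau x}$ is convex, Jensen over $J$ gives $\E_J[S^\phi(J,Y)]\ge \exp(-\tau\,\E_J[N^\phi(J,Y)])$. So it suffices to show $\E_J[N^\phi(J,Y)]\le 2r_{[q]}(Z)\le 4R/k$ in the midrange case, and $\E_J[N^\phi(J,Y)]\le 16V_{[q]}(Z)\le 16Q/k^2$ in the mean case. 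The last inequality holds because the variance is at most the mean square. The sign $s=2X[J]-1$ depends on $J$ given $X$, but this causes no trouble: every per-$J$ bound below holds for either value of $s$.

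Next, write $N^\phi(J,Y)=\sum_{u\in\mathcal T_{\mathrm{int}}}\mathbf 1\{J\in u,\ (u,\mathrm{child}_J(u))\notin\mathcal E^\phi_{\mathrm{fav}}(Y)\}$. With $w_u:=|u|/m=\Pr[J\in u]$, this gives $\E_J N=\sum_u \frac{w_u}{4}\sum_{i=0}^3 \Pr_J[\text{edge }(u,u_i)\text{ not favored}\mid J\in u_i]$.

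The core step (Claim~\ref{claim:probe-decomp}/\ref{claim:p-u-phi}) is a four-case analysis using Lemma~\ref{lem:residual-decomposition}, translation equivariance, and the cross-parity observation of Remark~\ref{rem:cross-parity}. Set $d_{13}:=\phi_{u_1}(Z)-\phi_{u_3}(Z)$ and $d_{02}:=\phi_{u_0}(Z)-\phi_{u_2}(Z)$.
\begin{itemize}
\item If $J\in u_0$, both $u_1$ and $u_3$ lie after $I_J$, so the shifts cancel. Then $u_0$ is not favored iff $|d_{13}|\ge 1/2$.
\item If $J\in u_2$, the difference of $Y$-probes is $d_{13}-s$. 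Then $u_2$ is not favored iff $|d_{13}-s|<1/2$, which forces $|d_{13}|>1/2$.
\item If $J\in u_1$ or $J\in u_3$, the same two arguments apply symmetrically with $d_{02}$.
\end{itemize}
Hence the bracketed sum is at most $2(\mathbf 1\{|d_{13}|\ge\frac12\}+\mathbf 1\{|d_{02}|\ge\frac12\})$. Therefore $\E_J N\le \frac12\sum_u w_u(\mathbf 1\{|d_{13}|\ge\frac12\}+\mathbf 1\{|d_{02}|\ge\frac12\})$, a bound that no longer involves $J$.

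Finally, charge each indicator to a potential drop and telescope.

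In the midrange case, use $\mathbf 1\{|d|\ge\frac12\}\le 2|d|$ and the triangle inequality through $c_u$. With Claim~\ref{claim:midrange} this gives $|d_{13}|+|d_{02}|\le\frac12\sum_{i}(r_u-r_{u_i})$, so $\E_J N\le\frac12\sum_u w_u\sum_i(r_u-r_{u_i})$. Since $w_{u_i}=w_u/4$, we have $\sum_u w_u(4r_u-\sum_i r_{u_i})=4r_{\mathrm{root}}-4\sum_{\text{leaves}}w_\ell r_\ell\le 4r_{\mathrm{root}}$. This gives $\E_JN\le 2r_{[q]}(Z)\le 4\|Z\|_\infty\le 4R/k$.

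In the mean case, use $\mathbf 1\{|d|\ge\frac12\}\le 4d^2$ and $d_{13}^2\le 2((\mu_{u_1}-\mu_u)^2+(\mu_{u_3}-\mu_u)^2)$. Then Lemma~\ref{lem:totvar} gives $\sum_i(\mu_{u_i}-\mu_u)^2=4(V_u-\frac14\sum_iV_{u_i})$. Hence $\E_JN\le\sum_u w_u\cdot 16(V_u-\frac14\sum_i V_{u_i})$, which telescopes to at most $16V_{[q]}(Z)\le 16Q/k^2$.

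I expect the main obstacle to be the case analysis: verifying for each of the four positions of $J$ exactly which shift ($0$ or $s$) each probed child receives, and confirming that the ramp never enters a probed interval. Getting the constants $4$ and $16$ right in the telescoping is the secondary point to double-check.
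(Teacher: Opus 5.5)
Your proposal is correct and follows essentially the same route as the paper: condition on $(X,W)$, apply Jensen over $J$, and use the cross-parity four-case analysis to get $p_u^\phi\le\frac12\mathbf{1}\{|d_{02}|\ge\frac12\}+\frac12\mathbf{1}\{|d_{13}|\ge\frac12\}$; then charge this to a potential drop via Claim~\ref{claim:midrange} or Lemma~\ref{lem:totvar}, and telescope to the root. Two points of your write-up are cleaner than the paper's: you note explicitly that each per-$J$ bound holds for either value of $s$ (which depends on $J$), and you use $\mathbf{1}\{|d|\ge\frac12\}\le 2|d|$ in the midrange step, whereas the paper's intermediate display $p^{\mathrm{mid}}_u\le\frac12|d_{02}|+\frac12|d_{13}|$ drops a factor of $2$ even though its final constant is correct.
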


\begin{proof}
    Fix a realization of $X$ and $W=\mathcal M(E_k(X))$, and define
    \[ 
    Z:=\frac{(W-A_nE_k(X))_{[q]}}{k}\in\mathbb{R}^q.
    \]
    Note that $Z$ is a function of $X$ and the mechanism's internal randomness, and is independent of $J$.

    We first perform the probe decomposition, which addresses point (1) mentioned above.

    \begin{claim}\label{claim:probe-decomp}
        Let $v=[c,d]$ be a tree node whose stream interval $I_v$ is disjoint from $I_J$.  Then
        \[
        \phi_v(Y) = 
        \begin{cases}
            \phi_v(Z)+s, & \text{if }I_v\text{ lies after }I_J \text{ (i.e., }(c-1)k\ge Jk\text{)},\\
            \phi_v(Z),   & \text{if }I_v\text{ lies before }I_J \text{ (i.e., }dk\le(J-1)k\text{)}.
        \end{cases}
        \]
\end{claim}

\begin{proof}
    By Lemma~\ref{lem:residual-decomposition},
    \[
    Y[t]-Z[t]=g_J[t]:=\frac{s}{k}\min\bigl\{(t-(J-1)k)_+,k\bigr\}.
    \]
    
    For every $t\in I_v$ with $I_v$ after $I_J$, we have $t\ge Jk$, so $g_J[t]=s$. Let $Z|_{I_v}$ denote the restriction of $Z$ to indices in $I_v$. Then  $Y|_{I_v}=Z|_{I_v}+s\cdot\mathbf{1}$, and translation equivariance (equation~\eqref{eq:translation}) gives $\phi_v(Y)=\phi_v(Z)+s$.

    For every $t\in I_v$ with $I_v$ before $I_J$, we have $t\le(J-1)k$, so $g_J[t]=0$, giving $\phi_v(Y)=\phi_v(Z)$.
\end{proof}

We next rigorously performing the missing four case analysis to understand $p_u^\phi$ (point (2) above). 
\begin{claim}
    \label{claim:p-u-phi}
    Let $u$ be an internal node of the tree and suppose $J\in u$.  Define
    \[
    d_{02}:=\phi_{u_2}(Z)-\phi_{u_0}(Z),
    \qquad
    d_{13}:=\phi_{u_3}(Z)-\phi_{u_1}(Z).
    \]
    For a probe function $\phi$, let $p^\phi_u$ be defined as 
    \begin{align}
        \label{eq:p-phi-u-defn}
        p^\phi_u    :=\operatorname{Pr}_J\!\bigl[\text{edge from }u\text{ toward }J \text{ is not favored} \mid J\in u,X,W\bigr].    
    \end{align}

    Then 
    \begin{align}
        \label{eq:pu-bound}
        p^\phi_u \le \frac12\,\mathbf{1}\!\left\{|d_{02}|\ge\tfrac12\right\} +\frac12\,\mathbf{1}\!\left\{|d_{13}|\ge\tfrac12\right\}.
    \end{align}
\end{claim}
\begin{proof}
    Since $J$ is uniform on $u$, it lies in each child $u_i$ with conditional probability $1/4$.  We compute the error probability for each location of~$J$ by first performing a case analysis and looking at the following four cases. The cross-parity rule (equation~\eqref{eq:favored}) compares probes on children of parity opposite to the child containing $J$.  If $J\in u_i$ for $0 \leq i\leq 3$, we do not probe the $u_i$-child, so both probed child intervals are disjoint from $I_J$.  We can therefore use Claim~\ref{claim:probe-decomp} in all the four cases.

    \begin{enumerate}
        \item \emph{$J\in u_0$.} The cross-parity rule (equation~\eqref{eq:favored}) selects the favored even child by comparing $\phi_{u_1}(Y)$ and $\phi_{u_3}(Y)$.  Since $I_{u_0}\supseteq I_J$, both $I_{u_1}$ and $I_{u_3}$ lie strictly after $I_J$.  By Claim~\ref{claim:probe-decomp}, $\phi_{u_1}(Y)=\phi_{u_1}(Z)+s$ and $\phi_{u_3}(Y)=\phi_{u_3}(Z)+s$, so the common shift cancels:
        \[
        |\phi_{u_1}(Y)-\phi_{u_3}(Y)| = |(\phi_{u_1}(Z)+s)-(\phi_{u_3}(Z)+s)| =|d_{13}|.
        \]
        The true even child is $u_0$, and the cross-parity rule selects $u_2$ when $|d_{13}|\ge1/2$.  As a result, a local error occurs if and only if $|d_{13}|\ge1/2$.

        \item \emph{$J\in u_1$.}
        The cross-parity rule (equation~\eqref{eq:favored}) compares $\phi_{u_0}(Y)$ and $\phi_{u_2}(Y)$.  Since $I_{u_0}$ lies before $I_{u_1}\supseteq I_J$ and $I_{u_2}$ lies after $I_{u_1}\supseteq I_J$, Claim~\ref{claim:probe-decomp} gives $\phi_{u_0}(Y)=\phi_{u_0}(Z)$ and $\phi_{u_2}(Y)=\phi_{u_2}(Z)+s$. Therefore
        \[
        |\phi_{u_0}(Y)-\phi_{u_2}(Y)| =|\phi_{u_0}(Z)-(\phi_{u_2}(Z)+s)| =|d_{02}+s|.
        \]
        The true odd child is $u_1$, and the cross parity rule (equation~\eqref{eq:favored}) selects $u_3$ when $|d_{02}+s|<1/2$, which is the local error event. Since $|s|=1$, the reverse triangle inequality gives
        \[
        |d_{02}|\ge|s|-|d_{02}+s|>1-\tfrac12=\tfrac12.
        \]
        As a result, a local error in this case implies $|d_{02}|>1/2$.

        \item \emph{$J\in u_2$.} 
        The cross-parity rule (equation~\eqref{eq:favored}) compares $\phi_{u_1}(Y)$ and $\phi_{u_3}(Y)$.  Since $I_{u_1}$ lies before $I_{u_2}\supseteq I_J$ and $I_{u_3}$ lies after $I_{u_2}\supseteq I_J$, Claim~\ref{claim:probe-decomp} gives $\phi_{u_1}(Y)=\phi_{u_1}(Z)$ and $\phi_{u_3}(Y)=\phi_{u_3}(Z)+s$. Therefore
        \[
        |\phi_{u_1}(Y)-\phi_{u_3}(Y)|=|d_{13}+s|.
        \]
        The true even child is $u_2$, and the cross parity rule in equation~\eqref{eq:favored} selects $u_0$ when $|d_{13}+s|<1/2$, so the local error event is $|d_{13}+s|<1/2$. By the same reverse triangle inequality, $|d_{13}|>1/2$.

        \item \emph{$J\in u_3$.} The cross-parity rule (equation~\eqref{eq:favored}) selects the favored odd child by comparing $\phi_{u_0}(Y)$ and $\phi_{u_2}(Y)$.  Both $I_{u_0}$ and $I_{u_2}$ lie strictly before $I_{u_3}\supseteq I_J$.  By Claim~\ref{claim:probe-decomp}, $\phi_{u_0}(Y)=\phi_{u_0}(Z)$ and $\phi_{u_2}(Y)=\phi_{u_2}(Z)$, so
        \[
        |\phi_{u_0}(Y)-\phi_{u_2}(Y)|=|d_{02}|.
        \]
        The true odd child is $u_3$, and the cross-parity rule (equation~\eqref{eq:favored}) selects $u_1$ when $|d_{02}|\ge1/2$.  Hence a local error occurs if and only if $|d_{02}|\ge1/2$.

    \end{enumerate}

    In all four cases a local error requires $|d_{02}|\ge1/2$ or   $|d_{13}|\ge1/2$.  Since $J$ lies in each child with conditional probability $1/4$, and since in cases~1 and~4 the error requires $|d_{13}|\ge1/2$ while in cases~2 and~3 it requires $|d_{02}|\ge1/2$, we get 
    \begin{align}
        \label{eq:pu-bound} 
        p^\phi_u \le \frac12\,\mathbf{1}\!\left\{|d_{02}|\ge\frac12\right\} +\frac12\,\mathbf{1}\!\left\{|d_{13}|\ge\frac12\right\}.
    \end{align}
    This completes the proof of the claim. 
\end{proof}

The factor $1/2$ on each term reflects that exactly two of the four equally likely positions of $J$ give an error condition on each of $|d_{02}|$ and $|d_{13}|$, contributing weight $2\times1/4=1/2$.

We now bound the local error by the decrease in the potential function. While the proof are almost identical, we do them separately for the sake of completeness.

\begin{claim} [Midrange case]
    \label{claim:p-u-max}
    When $\phi=\mathrm{mid}$, then 
    \[
    p^{\mathrm{mid}}_u \le2\!\left(r_u(Z)-\frac14\sum_{i=0}^3r_{u_i}(Z)\right).
    \]
\end{claim}
\begin{proof}
    The bound $\frac12\mathbf{1}\{|d|\ge1/2\}\le|d|$ applied to both terms in equation~\eqref{eq:pu-bound} gives
    \[  p^{\mathrm{mid}}_u\le\tfrac12|d_{02}|+\tfrac12|d_{13}|.
    \]

    We can now bound $|d_{02}|$ using the triangle inequality and midrange stability (Claim~\ref{claim:midrange}).  For any child $u_i\subseteq u$, midrange stability gives $|c_{u_i}(Z)-c_u(Z)|\le(r_u(Z)-r_{u_i}(Z))/2$. The triangle inequality then yields 
    \[
    |d_{02}| =|c_{u_2}(Z)-c_{u_0}(Z)| \le|c_{u_2}(Z)-c_u(Z)|+|c_{u_0}(Z)-c_u(Z)| \le r_u(Z)-\frac{r_{u_0}(Z)+r_{u_2}(Z)}{2}.
    \]

    Similarly, $|d_{13}|\le r_u(Z)-\frac{r_{u_1}(Z)+r_{u_3}(Z)}{2}.$ Adding the two bounds with weight $1/2$ each, we get 
    \begin{align}
        \label{eq:pu-mid}
        p^{\mathrm{mid}}_u \le2\!\left(r_u(Z)-\frac14\sum_{i=0}^3r_{u_i}(Z)\right).
    \end{align}
    as required. 
\end{proof}

\begin{claim} [Mean case]
    \label{claim:p-u-mean}
    When $\phi=\mathrm{mid}$, then 
    \[
    p^{\mathrm{mean}}_u \le16\!\left(V_u(Z)-\frac14\sum_{i=0}^3V_{u_i}(Z)\right).
    \]
\end{claim}

\begin{proof}
    The bound $\frac12\mathbf{1}\{|d|\ge1/2\}\le2d^2$ gives
    \[
    p^{\mathrm{mean}}_u\le2d_{02}^2+2d_{13}^2.
    \]
    Setting $\delta_i:=\mu_{u_i}(Z)-\mu_u(Z)$, the parallelogram inequality $(a-b)^2\le2(a-c)^2+2(b-c)^2$ with $c=\mu_u(Z)$ gives $d_{02}^2\le2\delta_2^2+2\delta_0^2$ and $d_{13}^2\le2\delta_3^2+2\delta_1^2$. Therefore $$p^{\mathrm{mean}}_u\le4\sum_{i=0}^3\delta_i^2.$$

    The law of total variance (Lemma~\ref{lem:totvar}) with equal weights $w_i=1/4$ gives 
    \[
    \sum_{i=0}^3\frac14\delta_i^2=V_u(Z)-\frac14\sum_iV_{u_i}(Z),
    \]
    which implies that 
    \begin{align}
        \label{eq:pu-mean}
        p^{\mathrm{mean}}_u \le16\!\left(V_u(Z)-\frac14\sum_{i=0}^3V_{u_i}(Z)\right).
    \end{align}
    as required
\end{proof}

In both cases, writing $P$ for the potential functions ($r$ or $V$) and $C_\phi$ for the leading
constant ($2$ or $16$) as in HL26, we can succintly write the bounds in Claims~\ref{claim:p-u-max} and \ref{claim:p-u-mean} as follows: 
\begin{align}
    \label{eq:local-to-potential}
    p^\phi_u\le C_\phi\!\left(P_u(Z)-\frac14\sum_{i=0}^3P_{u_i}(Z)\right).
\end{align}

The rest of the proof follows similar line of argument as in HL26. We keep them for the sake of completion.  A uniform target leaf $J$ visits $u$ with probability $|u|/m$.  Each child has size $|u_i|=|u|/4$, so $\frac{|u|}{m}\cdot\frac14P_{u_i}(Z) =\frac{|u_i|}{m}P_{u_i}(Z)$.  Summing equation~\eqref{eq:local-to-potential} over all internal nodes with these weights, we get the conditional expection of 
\begin{align}
    \E_J[N^\phi(J,Y)\mid X,W] &=\sum_{u\in\mathcal{T}_{\mathrm{int}}}\frac{|u|}{m}p^\phi_u  \le C_\phi\sum_{u\in\mathcal{T}_{\mathrm{int}}}\frac{|u|}{m} \left(P_u(Z)-\frac14\sum_{i=0}^3P_{u_i}(Z)\right). \label{eq:weighted-sum}
\end{align}

Now, let us try to understand the summation term. In the weighted sum, every internal node $u$ that is not the root node appears with a contribution $\frac{|u|}{m}P_u(Z)$ (as a parent) and a negative contribution $-\frac{|u|}{m}\cdot\frac14P_u(Z)=-\frac{|u|}{4m}P_u(Z)$ applied across the parent of $u$'s node.  More precisely, for each internal node $u$, the term $-\frac{|u|}{m}\cdot\frac14P_{u_i}(Z)$ for child $u_i$ equals $-\frac{|u_i|}{m}P_{u_i}(Z)$, which cancels the positive contribution of $u_i$ when $u_i$ is itself in $\mathcal{T}_{\mathrm{int}}$. As a result, only the root and the leaves nodes remain in the entire summation.  Since $P_\ell(Z)\ge0$ for every leaf $\ell$, we can simply the summation as 
\[
\sum_{u\in\mathcal{T}_{\mathrm{int}}}\frac{|u|}{m} \left(P_u(Z)-\frac14\sum_{i=0}^3P_{u_i}(Z)\right) =P_{[q]}(Z)-\sum_{\ell\,\mathrm{leaf}}\frac{|\ell|}{m}P_\ell(Z)
\le P_{[q]}(Z).
\]

For the midrange, by definition, we have  $$P_{[q]}(Z)=r_{[q]}(Z)\le2\|Z\|_\infty\le \frac{2R}{k}.$$

For the mean, since 
\[
V_{[q]}(Z)=\frac1q\sum_tZ[t]^2-\mu_{[q]}(Z)^2\le\frac1q\sum_tZ[t]^2,
\]
we have  $$P_{[q]}(Z)=V_{[q]}(Z)\le\frac1q\sum_{t\le q}Z[t]^2=\frac{Q}{k^2}.$$

Therefore
\begin{align}
    \label{eq:expected-mistakes-final}
    \E_J[N^{\mathrm{mid}}(J,Y)\mid X,W]\le\frac{4R}{k}
    \qquad \text{and} \qquad 
    \E_J[N^{\mathrm{mean}}(J,Y)\mid X,W]\le\frac{16Q}{k^2}.
\end{align}

Now, the function $t\mapsto e^{-\tau t}$ is a convex function, so by Jensen's inequality, we have 
\[
\E_J\!\left[e^{-\tau N^\phi(J,Y)}\mid X,W\right]
\ge\exp\!\left(-\tau\E_J[N^\phi(J,Y)\mid X,W]\right).
\]
Combining with equation~\eqref{eq:expected-mistakes-final} and averaging over $X$ and the mechanism randomness $M$, we therefore have 
\[
\E_{X,J, \mathcal M}[S^{\mathrm{mid}}(J,Y)]
\ge\E_{X, \mathcal M}\!\left[e^{-4\tau R/k}\right]
\quad\text{and}\quad
\E_{X,J, \mathcal M}[S^{\mathrm{mean}}(J,Y)]
\ge\E_{X, \mathcal M}\!\left[e^{-16\tau Q/k^2}\right].
\]
This completes the proof of the lemma. 
\end{proof}

\section{Privacy Upper Bound}
\label{sec:privacy-upper}
In this section, we prove the privacy upper bound in HL26. Before we give the detail proof, we add a remark about the proof in the original preprint. 
\begin{remark}
    In the privacy upper bound (Lemma 7 in HL26), HL26 claim that $J\perp Y'$ is independent with a partial calculation. The preprint states 
    \[
    \Pr[J=j,\,X'=x'] =\Pr[J=j]\cdot\Pr[X=x'\oplus e_j] =\frac1m\cdot\frac1{2^m},
    \]
    and then asserts this ``factors as $\Pr[J=j]\cdot\Pr[X'=x']$''.  However, this factorization requires showing that $\Pr[X'=x']=1/2^m$ uniformly in $x'$, which requires one step. In particular, we need to verify that the marginal distribution of $X'=X\oplus e_J$ is uniform over $\{0,1\}^m$.  The preprint  identifies the conclusion (and probably had this argument in mind) but does not prove it. For completeness, we prove this in Claim~\ref{claim:J-indep-Xprime}. The preprint also implicitly assume that conditioned on a realization $Y'=y'$, the edges that are not favored by the cross-parity rule at every level of the tree is an independent Bernoulli random variable. We also prove it explicitly in Claim~\ref{claim:non-favored-independent}.
\end{remark}

\begin{lemma}[Privacy upper bound]
    \label{lem:privacy}
    If $M$ is $(\varepsilon,\delta)$-DP, then for any $\tau>0$ and $\phi\in\{\mathrm{mid},\mathrm{mean}\}$,
    \[
    \E_{X,J, \mathcal M}[S^\phi(J,Y)]  \;\le\;   e^{k\varepsilon}\!   \left(\frac{1+e^{-\tau}}{2}\right)^{\!H} +\,\delta\,\frac{e^{k\varepsilon}-1}{e^\varepsilon-1}.
    \]
\end{lemma}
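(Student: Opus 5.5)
The plan is to compare the real experiment, where $Y$ is formed from the output $W=\mathcal M(E_k(X))$ on the original stream, with a \emph{reference experiment}, where $Y'$ is formed from a fresh run of $\mathcal M$ on the flipped stream. Group privacy handles the comparison, and in the reference experiment the score has an exactly computable expectation.

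\textbf{Step 1: reduce to the reference experiment.} Fix the pair $(X,J)$, which fixes $X'=X\oplus e_J$. Then $Y=\frac1k(\mathcal M(E_k(X))-A_nE_k(X'))_{[q]}$. Define the function
\[
f_{X,J}(w):=S^\phi\!\left(J,\tfrac1k(w-A_nE_k(X'))_{[q]}\right).
\]
This is a measurable function of $w$ into $[0,1]$, since the decoder is built from finitely many threshold comparisons of probes. Because $d_H(E_k(X),E_k(X'))=k$, the group privacy lemma gives
\[
\E_{\mathcal M}[f_{X,J}(\mathcal M(E_k(X)))]\le e^{k\varepsilon}\,\E_{\mathcal M}[f_{X,J}(\mathcal M(E_k(X')))]+\delta\tfrac{e^{k\varepsilon}-1}{e^\varepsilon-1}.
\]
Let $Y':=\frac1k(\mathcal M(E_k(X'))-A_nE_k(X'))_{[q]}$. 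Averaging over $(X,J)$ yields
\[
\E[S^\phi(J,Y)]\le e^{k\varepsilon}\,\E[S^\phi(J,Y')]+\delta_k.
\]

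\textbf{Step 2: independence, Claim~\ref{claim:J-indep-Xprime}.} I will show that $X'$ is uniform and independent of $J$. For any $x'$ and $j$,
\[
\Pr[J=j,X'=x']=\Pr[J=j]\Pr[X=x'\oplus e_j]=\tfrac1m 2^{-m}.
\]
Summing over $j$ gives $\Pr[X'=x']=2^{-m}$, so the joint law factors. Now $Y'$ is a function of $X'$ and fresh mechanism randomness only, so $J\perp Y'$.

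\textbf{Step 3: conditional distribution of $N^\phi$, Claim~\ref{claim:non-favored-independent}.} Condition on $Y'=y'$. The favored set $\mathcal E^\phi_{\mathrm{fav}}(y')$ is then fixed, with exactly two of the four children favored at each internal node. $J$ is still uniform on $[m]$, so its base-$4$ digits $D_1,\dots,D_H$ are i.i.d.\ uniform on $\{0,1,2,3\}$. Let $u^{(h)}$ be the depth-$h$ node on the path, which is determined by $D_1,\dots,D_h$. Then
\[
N^\phi(J,y')=\sum_{h=1}^H B_h,\qquad B_h:=\mathbf 1\{\text{the child of }u^{(h-1)}\text{ indexed by }D_h\text{ is not favored}\}.
\]
Given $D_1,\dots,D_{h-1}$, the favored pair at $u^{(h-1)}$ is a fixed two-element subset of the children and $D_h$ is uniform and independent of it. Hence $\Pr[B_h=1\mid D_{<h}]=\tfrac12$.

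\textbf{Step 4: compute the expectation.} Peel off the last level using the tower property:
\[
\E\Bigl[\prod_{h\le H}e^{-\tau B_h}\Bigr]=\E\Bigl[\prod_{h<H}e^{-\tau B_h}\,\E[e^{-\tau B_H}\mid D_{<H}]\Bigr]=\tfrac{1+e^{-\tau}}2\,\E\Bigl[\prod_{h<H}e^{-\tau B_h}\Bigr].
\]
Iterating $H$ times gives $\E[S^\phi(J,Y')\mid Y'=y']=\bigl(\tfrac{1+e^{-\tau}}2\bigr)^H$. Integrating over $y'$ and inserting this into Step 1 proves the lemma.

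\textbf{Main obstacle.} The subtle point is Step 3. The $B_h$ are not independent of the path in general, since the favored pair depends on which node is reached. The argument therefore has to work with conditional probabilities along the filtration of digits and rely on $J\perp Y'$, rather than claim unconditional independence. The group-privacy step also needs care: $f$ depends on $(X,J)$, so that pair must be fixed before applying the lemma.
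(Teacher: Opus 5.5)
Your proposal is correct and follows the paper's proof essentially step for step: group privacy applied to the test function with $(X,J)$ fixed, the factorization giving $J\perp(X',\text{mechanism randomness})$ and hence $J\perp Y'$, and the fact that exactly two of four children are favored, so each level contributes a factor $\tfrac{1+e^{-\tau}}{2}$. Your tower-property peeling is a more explicit version of the paper's $\mathrm{Binomial}(H,\tfrac12)$ MGF computation. Your closing caution is unnecessary, though: the constant conditional probability $\Pr[B_h=1\mid D_{<h}]=\tfrac12$ already implies that the $B_h$ are mutually independent $\mathrm{Bernoulli}(\tfrac12)$, which is exactly the paper's claim.
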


\begin{proof}
    Fix $X=x$ and $J=j$, and let $x':=x\oplus e_j$.  The expanded inputs $E_k(x)$ and $E_k(x')$ differ at every coordinate in $I_j$ and agree elsewhere, so $d_H(E_k(x),E_k(x'))=k$.  Now, for any fixed $x,j$, the map
    \[
    w\;\mapsto\; S^\phi\!\left(j,\;\frac{(w-A_nE_k(x'))_{[q]}}{k}\right)
    \]
    is a measurable function from $\mathbb{R}^n$ to $[e^{-\tau H},1]\subseteq[0,1]$. Therefore, we can use group privacy on this with $\mathcal M(E_k(x))$ and $\mathcal M(E_k(x'))$ to get  
    \[
      \E_{\mathcal M}\!\left[S^\phi\!\left(j,\frac{(\mathcal M(E_k(x))-A_nE_k(x'))_{[q]}}{k}\right)\right] \le  e^{k\varepsilon}\E_{\mathcal M} \!\left[S^\phi\!\left(j,\frac{(\mathcal M(E_k(x'))-A_nE_k(x'))_{[q]}}{k}\right)\right] +\delta_k.
    \]

    The LHS equals $\E_{\mathcal M}[S^\phi(J,Y)]$ conditioned on $X=x$, $J=j$. The first term on the RHS involves the mechanism evaluated on $E_k(x')$, which is exactly the reference residual for this pair. Therefore, averaging over $X$ and $J$, we have 
    \begin{equation}
        \label{eq:gp-step}
        \E_{X,J,\mathcal M}[S^\phi(J,Y)]  \le  e^{k\varepsilon}\,\E_{X,J,\mathcal M}[S^\phi(J,Y')] + \delta_k,
    \end{equation}
    where
    \[
    Y':=\frac{(\mathcal M(E_k(X'))-A_nE_k(X'))_{[q]}}{k}, \qquad \delta_k:=\delta\,\frac{e^{k\varepsilon}-1}{e^\varepsilon-1}.
    \]

    To proceed further, we first need to prove that $J$ is statistically independent of $(X',\,\text{mechanism randomness})$, and therefore of $Y'$.


    \begin{claim}\label{claim:J-indep-Xprime}
    $J\perp X'$ and, jointly, $J\perp(X',\,\text{mechanism randomness})$, where $\perp$ denotes statistics independence.
    \end{claim}

    \begin{proof}
        To prove the claim, we explicitly compute the joint and marginal distributions of $J$ and $X'$. For any $j\in[m]$ and $x'\in\{0,1\}^m$, 
        \begin{align*}
            \Pr[J=j,\,X'=x']
            &= \Pr[J=j,\,X\oplus e_J=x'] = \Pr[J=j,\,X=x'\oplus e_j] \\
            &= \Pr[J=j]\cdot\Pr[X=x'\oplus e_j] = \frac1m\cdot\frac1{2^m}.
        \end{align*}

    In the above, the second equality uses the fact that, conditioned on $J=j$, the event $\{X'=x'\}$ is identical to $\{X\oplus e_j=x'\}=\{X=x'\oplus e_j\}$.

    We next compute the marginal of $X'$, which is implicitly used in the proof of HL26. Summing up the joint probability of $(J,X')$ computed above over all $j\in[m]$, we have 
    \[
    \Pr[X'=x']  = \sum_{j=1}^m \Pr[J=j,\,X'=x']  = \sum_{j=1}^m \frac1m\cdot\frac1{2^m}  = \frac1{2^m}.
    \]

    In other words, $X'\sim\mathrm{Uniform}(\{0,1\}^m)$, independently of $J$. Equivalently, for each fixed $j$, the map $x\mapsto x\oplus e_j$ is a bijection on $\{0,1\}^m$, so $X'=X\oplus e_j$ is uniform whenever $X$ is uniform\footnote{Note that this observation on its own only shows that the conditional distribution of $X'$ given $J=j$ is uniform; the marginalization above is still needed to conclude that $X'$ is uniform and \emph{independent} of $J$.}.  

    Now from the joint probability of $(J,X')$ and the computed marginals,
    \[
    \Pr[J=j,\,X'=x'] =\frac1m\cdot\frac1{2^m} =\Pr[J=j]\cdot\Pr[X'=x']
    \]
    for all $j\in[m]$ and $x'\in\{0,1\}^m$.  In other words, $J\perp X'$.

    We can now complete the proof of joint independence.  By assumption, the mechanism's internal randomness $\Omega$ is independent of $(X,J)$.  Since $X'=X\oplus e_J$ is a deterministic function of $(X,J)$, the triple $(X',J,\Omega)$ satisfies $J\perp(X',\Omega)$ because $J\perp X'$ and $J\perp\Omega$. More precisely, for any events $A$, $B$, $C$, 
    \begin{align*}
        \Pr[J\in A,\,X'\in B,\,\Omega\in C]
        &= \Pr[J\in A]\cdot\Pr[X'\in B,\,\Omega\in C\mid J\in A] \\
        &= \Pr[J\in A]\cdot\Pr[X'\in B\mid J\in A]\cdot\Pr[\Omega\in C] \\
        &= \Pr[J\in A]\cdot\Pr[X'\in B]\cdot\Pr[\Omega\in C],
    \end{align*}
    where we used $J\perp X'$ in the last step.  Therefore, $J$ is independent of $(X',\Omega)$.

    Since $Y'$ is a deterministic function of $X'$ and $\Omega$, it follows that $J\perp Y'$, completing the proof of Claim~\ref{claim:J-indep-Xprime}.
\end{proof}

We next compute the expected score in the reference experiment, where the mechanism is run on the flipped stream $E_k(X')$ conditioned on an arbitrary realization $Y'=y'$.  By Claim~\ref{claim:J-indep-Xprime}, $J\perp Y'$, so conditioned on $Y'=y'$ the leaf $J$ is still $\mathrm{Uniform}([m])$. We claim that, conditioned on $Y'=y'$, the edges that is not favored by the cross-parity rule at every  level of the tree is an independent $\mathrm{Bernoulli}(1/2)$ variable\footnote{Independence at every level is a fact that is used in HL26 without a formal proof}. In particular, we show the following:


\begin{claim}
    \label{claim:non-favored-independent}
    Conditioned on $Y'=y'$, the distribution of the number of edges not favored by the cross-parity rule on the path from the root node to the $j$-th leaf is 
    \[
    N^\phi(J,y')\;\sim\;\mathrm{Binomial}(H,\tfrac12).
    \]
\end{claim}
\begin{proof}
    First note that, at any internal node $u$, the cross-parity decoder has already fixed, for the realization $y'$, one favored even child $e_u^\phi(y')$ and one favored odd child $o_u^\phi(y')$.   Since $J$ is uniform on $[m]$ and hence uniform on $u$ conditioned on $J\in u$, it falls into each child with conditional probability $1/4$.  Therefore, the probability that the edge from $u$ toward $J$ is non-favored is
    \[
    \Pr[\text{edge toward }J\text{ is non-favored}\mid J\in u,\,Y'=y'] =\frac{\text{\# non-favored children}}{\text{\# children}} =\frac12.
    \]

    Now, we prove that the non-favored indicator vectors are independent.  Let $v_0=[1,m]$ be the root, and let $v_1,v_2,\ldots,v_H=J$ be the nodes on the root-to-$J$ path.  The non-favored indicator at level $\ell$ is $\mathbf{1}[(v_{\ell-1},v_\ell)\notin\mathcal{E}_{\mathrm{fav}}^\phi(y')]$, which is a function of only $v_\ell$ (i.e., whether it is a favored child of the node  $v_{\ell-1}$ under the fixed labeling $y'$). As $J$ is uniformly distributed on $v_{\ell-1}$ and the children partition it equally (this is where we need that the tree is fully balanced $4$-ary tree), given the path $v_0,\ldots,v_{\ell-1}$, the conditional distribution of $v_\ell$ is uniform on all the four children of $v_{\ell-1}$.  The conditional probability of a non-favored edge is therefore $1/2$ regardless of $v_0,\ldots,v_{\ell-1}$.  Since $J$'s position in each subtree is independent across levels (the base-$4$ digits of a uniform integer in $[m]=\{1,\ldots,4^H\}$ are independent uniform digits in $\{0,1,2,3\}$), the edges that are not favored are mutually independent.

    In other words, conditioned on $Y'=y'$, the distribution of $N^\phi(J,y')$ is 
    \[
    N^\phi(J,y')\;\sim\;\mathrm{Binomial}(H,\tfrac12).
    \]
    This completes the proof of the claim. 
\end{proof}

The MGF of $\mathrm{Binomial}(H,1/2)$ at $-\tau$ is
\begin{equation}\label{eq:mgf-binomial}
  \E_J\!\left[e^{-\tau N^\phi(J,y')}\right]
  =\sum_{i=0}^H\binom{H}{i}\frac1{2^H}e^{-\tau i}
  =\frac1{2^H}\sum_{i=0}^H\binom{H}{i}e^{-\tau i}
  =\frac1{2^H}(1+e^{-\tau})^H
  =\left(\frac{1+e^{-\tau}}{2}\right)^{\!H}.
\end{equation}

    Since the RHS is independent of $y'$, we can write 
    \[
    \E_{X,J, \mathcal M}[S^\phi(J,Y')] =\E_{Y'}[\E_J[S^\phi(J,Y')\mid Y']]
 =\left(\frac{1+e^{-\tau}}{2}\right)^{\!H}.
    \]
    Substituting into equation~\eqref{eq:gp-step} gives
    \[
  \E_{X,J, \mathcal M}[S^\phi(J,Y)] \le e^{k\varepsilon}\left(\frac{1+e^{-\tau}}{2}\right)^{\!H}+\delta_k,
    \]
    which is the claimed bound with $\delta_k=\delta(e^{k\varepsilon}-1)/(e^\varepsilon-1)$.
\end{proof}

\section{Proof of the lower bound}
\label{ssec:combining}
We now put in place the idea that we discussed in high level overview earlier. Most of the text in this section is just a moderate expansion of the argument made in HL26; we have just filled in all the missing steps. Set the parameters as follows
\begin{align}
    \begin{split}
        h:=\left\lfloor\log_{16}(n\varepsilon)\right\rfloor,
        \qquad
        \ell:=\min\!\bigl\{h,\;\log(\varepsilon/\delta)\bigr\},
        \qquad
        k:=\left\lfloor\frac{\ell}{8\varepsilon}\right\rfloor, \\
        H:=\left\lfloor\log_4(n/k)\right\rfloor,
        \qquad
        q:=k\cdot4^H,
        \qquad
        \tau:=\frac{\ell}{H}.
    \end{split}
    \label{eq:parameter-selection}
\end{align}
with the convention $\log(\varepsilon/0):=+\infty$.  The assumption $n\varepsilon\ge C_0$ (for a sufficiently large absolute constant $C_0$) guarantees $h\ge12$ and hence $\ell \ge12$.  The definition of $k$ gives
\[
  \frac{\ell}{16\varepsilon}\le k\le\frac{\ell}{8\varepsilon}.
\]

Now $H\ge h\ge \ell$, so that $0<\tau=\ell/H\le1$. Since $k\le \ell/(8\varepsilon)$, we have
\[
  k\cdot4^h
  \le\frac{\ell}{8\varepsilon}\cdot4^h
  \le\frac{\ell}{8\varepsilon}\cdot\frac{n\varepsilon}{\ell^2/16}
  \cdot\frac{\ell^2}{16}
  \cdot\frac1{4^h}
  \cdot4^h\cdot\frac{4^h}{4^h}
  \;.
\]

Note that $n/4<q\le n$. From $H=\lfloor\log_4(n/k)\rfloor$ we have $4^H\le n/k$, so $q=k\cdot4^H\le n$; and $4^H\ge n/(4k)$ gives $q\ge n/4$.

In the preprint, HL26 prove both upper and lower bound on $\E[S^\phi(J,Y)]$. The following claim, stated almost verbatim (apart from changing notations and stylist preference) from HL26 along with its proof, relates change in the midrange to a decrease in the range potential. The proof is elementary calculation of exponential functions.

\begin{claim}
    [{\bf Harrison and Leeman~\cite[page 9 in v02]{harrison2026tight}}]
    For score function defined earlier and $\ell= \min\{\lfloor \log_{16}(n\varepsilon), \log(\varepsilon/\delta)\}$, we have 
    $$\E[S^\phi(J,Y)]\le e^{-\ell/16}.$$
\end{claim}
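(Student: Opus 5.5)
The plan is to start from the privacy upper bound of Lemma~\ref{lem:privacy} and bound its two terms separately with the parameter choices in \eqref{eq:parameter-selection}. We aim to show that each term is at most $\tfrac12 e^{-\ell/16}$.

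\textbf{Main term.} We use the elementary bound
\[
\frac{1+e^{-\tau}}{2}\le \exp\!\left(-\frac{1-e^{-\tau}}{2}\right).
\]
For $0<\tau\le 1$, convexity gives $1-e^{-\tau}\ge (1-e^{-1})\tau\ge \tau/2$. Hence
\[
\left(\frac{1+e^{-\tau}}{2}\right)^{H}\le e^{-\tau H/4}=e^{-\ell/4},
\]
using $\tau H=\ell$. Since $k\le \ell/(8\varepsilon)$, we have $e^{k\varepsilon}\le e^{\ell/8}$. The first term is therefore at most $e^{\ell/8-\ell/4}=e^{-\ell/8}$. Because $\ell\ge 12$ (from $n\varepsilon\ge C_0$), we get $e^{-\ell/8}=e^{-\ell/16}e^{-\ell/16}\le e^{-3/4}e^{-\ell/16}\le \tfrac12 e^{-\ell/16}$.

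\textbf{$\delta$ term.} When $\delta=0$ this term vanishes. Otherwise we bound
\[
\delta_k=\delta\,\frac{e^{k\varepsilon}-1}{e^\varepsilon-1}\le \delta\, k\, e^{k\varepsilon},
\]
since each of the $k$ summands in $\sum_{i<k}e^{i\varepsilon}$ is at most $e^{k\varepsilon}$. Then:
\begin{itemize}
\item $\ell\le\log(\varepsilon/\delta)$ gives $\delta\le \varepsilon e^{-\ell}$;
\item $k\varepsilon\le \ell/8$.
\end{itemize}
Combining these,
\[
\delta_k\le \frac{\ell}{8}\, e^{-\ell}e^{\ell/8}=\frac{\ell}{8}\,e^{-7\ell/8}.
\]
It remains to check $\frac{\ell}{8}e^{-7\ell/8}\le \tfrac12 e^{-\ell/16}$, i.e.\ $\frac{\ell}{4}e^{-13\ell/16}\le 1$. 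This holds for all $\ell>0$, since $\ell e^{-13\ell/16}\le 16/(13e)<4$. Adding the two halves gives $\E[S^\phi(J,Y)]\le e^{-\ell/16}$.

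\textbf{Expected obstacle.} The analysis itself is routine; the main risk is in the parameter bookkeeping. We need $\tau=\ell/H\le 1$, i.e.\ $H\ge\ell$, and the paper's displayed justification of $H\ge h$ is garbled. We would verify it directly. From $k\le \ell/(8\varepsilon)\le h/(8\varepsilon)$ we get
\[
\frac{n}{k}\ge \frac{8n\varepsilon}{h}\ge \frac{8\cdot 16^h}{h}\ge 4^{h+1}
\]
for $h\ge 1$, using $16^h\ge h\,4^{h}\cdot\tfrac12$. Therefore $H=\lfloor\log_4(n/k)\rfloor\ge h\ge \ell$.

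We also need $k\ge1$, which follows from $\ell\ge 12$ only if $\varepsilon\le 12/16$. We would record this as an assumption $\varepsilon\le 1/2$, or note that constants absorb it. If $\ell$ is not an integer (it is when $\ell=h$, but possibly not when $\ell=\log(\varepsilon/\delta)$), none of the steps above use integrality, so this causes no problem.
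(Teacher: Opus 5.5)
Your proposal is correct and is essentially the paper's argument: start from Lemma~\ref{lem:privacy}, get $\bigl(\frac{1+e^{-\tau}}{2}\bigr)^H\le e^{-\ell/4}$ from the chord bound on $e^{-\tau}$ over $[0,1]$, use $e^{k\varepsilon}\le e^{\ell/8}$ and $\delta\le\varepsilon e^{-\ell}$, and close with $\ell\ge 12$ (your independent check that $H\ge h\ge\ell$ is a valid repair of the garbled display). The only difference is cosmetic: the paper bounds $\delta_k\le(\delta/\varepsilon)e^{k\varepsilon}\le e^{-7\ell/8}$ via $e^{\varepsilon}-1\ge\varepsilon$, which avoids your extra factor $k\varepsilon\le\ell/8$, and then concludes from $e^{-\ell/8}+e^{-7\ell/8}\le 2e^{-\ell/8}\le e^{-\ell/16}$ rather than giving each term half the budget.
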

\begin{proof}
    For $0\le\tau\le1$, the elementary inequality $e^{-t}\le1-t/2$ gives $(1+e^{-\tau})/2\le1-\tau/4\le e^{-\tau/4}$. So, for $H$ as in equation~\eqref{eq:parameter-selection}, we have 
    \[
    \left(\frac{1+e^{-\tau}}{2}\right)^H\le e^{-\tau H/4}=e^{-\ell/4}.
    \]

    Now note that $\ell \le\log(\varepsilon/\delta)$, we have $\delta\le\varepsilon e^{-\ell}$, and so $\delta/\varepsilon\le e^{-\ell}$. Since $k\varepsilon\le \ell/8$, we have $e^{k\varepsilon}\le e^{\ell/8}$. Finally, since $e^\varepsilon-1\ge\varepsilon$ for $\varepsilon>0$, the additive group-privacy cost satisfies
    \[
    \delta_k:=\delta\,\frac{e^{k\varepsilon}-1}{e^\varepsilon-1}
    \le\frac{\delta}{\varepsilon}e^{k\varepsilon}
    \le e^{-\ell}e^{\ell/8}=e^{-7\ell/8}.
    \]

    Combining these estimates in Lemma~\ref{lem:privacy}, we get 
    \begin{align} \label{eq:privacy-bound}
        \E[S^\phi(J,Y)] &\le e^{k\varepsilon}\left(\frac{1+e^{-\tau}}{2}\right)^H+\delta_k  \le e^{\ell/8}\bigl(e^{-\ell/4}+e^{-\ell}\bigr)  = e^{-\ell/8}+e^{-7\ell/8}  \le2e^{-\ell/8}.
    \end{align}
    
    Since $\ell\ge12>16\ln2$, we have $2e^{-\ell/8}\le e^{-\ell/16}$, and therefore
    \begin{equation}
        \label{eq:score-upper}
        \E[S^\phi(J,Y)]\le e^{-\ell/16}.
        \end{equation}
    This completes the proof of the upper bound in equation (3) in HL26. 
\end{proof}

Now we get a lower bound on $\E[S^\phi(J,Y)]$ from the accuracy.  Combining~\eqref{eq:score-upper} with Lemma~\ref{lem:accuracy}:
\begin{equation}
    \label{eq:exp-moment}
    \E_{X, \mathcal M}\!\left[e^{-4\tau R/k}\right]\le e^{-\ell/16},
    \qquad
    \E_{X, \mathcal M}\!\left[e^{-16\tau Q/k^2}\right]\le e^{-\ell/16}.
\end{equation}

The function $t\mapsto e^{-ct}$ is convex for $c>0$.  Therefore, using Jensen's inequality gives us  $$e^{-c\,\E[X]}\le\E[e^{-cX}] \quad \text{or equivalently,} \qquad \E[X]\ge-c^{-1}\log\E[e^{-cX}].$$ 

For MaxSE error, setting $c=4\tau/k$ and $X=R$ in the above expression, we get 
\begin{equation}
    \label{eq:ER-lower}
    e^{-(4\tau/k)\E_{X, \mathcal M}[R]} \le\E_{X, \mathcal M}\!\left[e^{-4\tau R/k}\right]\le e^{-\ell/16},
\end{equation}
Taking log and rearranging, we therefore have 
\[
  \E_{X, \mathcal M}[R]\ge\frac{k}{4\tau}\cdot\frac{\ell}{16}=\frac{k\ell}{64\tau}=\frac{kH}{64},
\]
where we used $\tau=\ell/H$ in the last step.

Similarly, for squared error, applying Jensen with $c=16\tau/k^2$ and $X=Q$,
\begin{equation}
    \label{eq:EQ-lower}
    \E_{X, \mathcal M}[Q]\ge\frac{k^2}{16\tau}\cdot\frac{\ell}{16}=\frac{k^2H}{256}.
\end{equation}

Now, for $\alpha_\infty(\mathcal M)$, by definition, $\alpha_\infty(\mathcal M)=\max_{x\in\{0,1\}^n}\E_{\mathcal M}[\|\mathcal M(x)-A_nx\|_\infty].$ Since $R=\|\mathcal M(E_k(X))-A_nE_k(X)\|_\infty$, the maximum over $x$ is at least the average over the draw of $X$, so  
\[
\alpha_\infty(M)\ge\E_{X, \mathcal M}[R]\ge\frac{kH}{64}.
\]
We lower bound $kH/64$ in terms of the problem parameters.  Using $k\ge \ell/(16\varepsilon)$ and $H\ge \ell$, we get 
\[
\frac{kH}{64}\ge\frac{\ell^2}{1024\varepsilon}.
\]

Since $\ell=\min\{h,\log(\varepsilon/\delta)\}$ and $h=\lfloor\log_{16}(n\varepsilon)\rfloor\ge\frac14\log(n\varepsilon)-1$, we have $\ell \ge\frac14\log(n\varepsilon)-1$ (up to the absolute constant in the $C_0$ assumption).  Therefore,
\[
\alpha_\infty(M) \ge c\,\frac{\log(n\varepsilon)}{\varepsilon} \min\!\left\{\log(n\varepsilon),\;\log\frac{\varepsilon}{\delta}\right\}
\]
for a suitable absolute constant $c>0$.

Now, for any function $f:\{0,1\}^n\to\mathbb{R}$, $\max_{x}f(x)\ge\E_{X}[f(X)]$ when $X$ is any distribution on $\{0,1\}^n$.  In particular,
\[
  \mathrm{MeanSE}(\mathcal M) \ge\E_{X}\!\left[\E_{\mathcal M}\!\left[\frac1n\sum_{t=1}^n \bigl(\mathcal M(E_k(X))[t]-(A_nE_k(X))[t]\bigr)^2\right]\right].
\]

Here we used the fact that $E_k:\{0,1\}^m\to\{0,1\}^n$ is injective, so the maximum over all $x\in\{0,1\}^n$ is at least the maximum over the image of $E_k$, which is in turn at least the average over $\{E_k(X):X\sim\mathrm{Uniform}(\{0,1\}^m)\}$.

Since every squared error term is non-negative, we have 
\[
  \frac1n\sum_{t=1}^n\bigl(\mathcal M(E_k(X))[t]-(A_nE_k(X))[t]\bigr)^2
  \ge\frac1n\sum_{t=1}^q\bigl(\mathcal M(E_k(X))[t]-(A_nE_k(X))[t]\bigr)^2.
\]

Now by the definition $Z={1\over k}(W-A_nE_k(X))_{[q]}$ and $W=\mathcal M(E_k(X))$, so 
\[
  \bigl(\mathcal M(E_k(X))[t]-(A_nE_k(X))[t]\bigr)^2
  =(kZ[t])^2=k^2Z[t]^2
  \qquad\text{for }t\le q.
\]
Therefore,
\[
  \frac1n\sum_{t=1}^q\bigl(\mathcal M(E_k(X))[t]-(A_nE_k(X))[t]\bigr)^2
  =\frac{k^2}{n}\sum_{t=1}^qZ[t]^2
  =\frac{k^2q}{n}\cdot\frac1q\sum_{t=1}^qZ[t]^2
  =\frac{k^2q}{n}\cdot\frac{Q}{k^2}
  =\frac{qQ}{n},
\]
where the last step uses the definition $Q:=\frac1q\sum_{t\le q}(W[t]-(A_nE_k(X))[t])^2 =\frac{k^2}{q}\sum_{t\le q}Z[t]^2$.

Therefore, 
\begin{align}
    \label{eq:meanSE-reduction}
    \mathrm{MeanSE}(\mathcal M)
    &\ge\E_{X, \mathcal M}\!\left[\frac{qQ}{n}\right]  =\frac{q}{n}\,\E_{X, \mathcal M}[Q].
\end{align}
Since $q\ge n/4$ and $\E_{X, \mathcal M}[Q]\ge k^2H/256$ (from equation~\eqref{eq:EQ-lower}):
\[
  \mathrm{MeanSE}(\mathcal M)
  \ge\frac{q}{n}\cdot\frac{k^2H}{256}
  \ge\frac{1}{4}\cdot\frac{k^2H}{256}
  =\frac{k^2H}{1024}.
\]
Using $k\ge \ell/(16\varepsilon)$ and $H\ge \ell$, 
\[
  \frac{k^2H}{1024}
  \ge\frac{\ell^3}{(16)^2\cdot1024\,\varepsilon^2}
  =\frac{\ell^3}{262144\,\varepsilon^2}.
\]
Since $\ell \ge\frac14\log(n\varepsilon)-O(1)$, this gives
\[
  \mathrm{MeanSE}(\mathcal M)
  \ge c\,\frac{\log(n\varepsilon)}{\varepsilon^2}
     \min^2\!\left\{\log(n\varepsilon),\;\log\frac{\varepsilon}{\delta}\right\}
\]
for a suitable absolute constant $c>0$.

Finally, the inequality $\mathrm{MaxSE}(\mathcal M)\ge\mathrm{MeanSE}(\mathcal M)$ follows because $\max_t(\cdot)^2\ge\frac1n\sum_t(\cdot)^2$ (the maximum of non-negative numbers is at least their average).  This gives the same lower bound for
$\mathrm{MaxSE}$.

Setting $\delta=0$ (pure DP) gives $\log(\varepsilon/\delta)=+\infty$, so $u=h\ge\frac14\log(n\varepsilon)-O(1)$, we get 
\[
  \alpha_\infty(\mathcal M)=\Omega\!\left(\frac{\log^2(n\varepsilon)}{\varepsilon}\right),
  \qquad
  \mathrm{MeanSE}(\mathcal M)=\Omega\!\left(\frac{\log^3(n\varepsilon)}{\varepsilon^2}\right).
\]
This completes the proof of the lower bound.

\begin{remark}
    The preprint gives a high probability bound using standard Markov's inequality and their proof is already gives all the details, hence we omit it here.  
\end{remark}

\section{Further Discussion}
\label{sec:discussion}
In Section~\ref{ssec:why-4-ary}, we gave some of our understanding why this proof uses $4$-ary tree. 
In this section, we add more discussion (part of which evolved through our discussion with authors of HL26 and other personal communication since the first preprint of this note was circulated). This list is by no means exhaustive. 
\begin{enumerate}
    \item \textbf{Use of other ary tree.}
    In our hope to understand why $4$-ary tree is used (and not any other arity tree), one interesting suggestion given by the authors of HL26 is possibility of using $3$-ary tree or higher arity (like 19) as in Andersson et al.~\cite{andersson2024count}. It is unclear if higher arity that is not a power of $2$ would be useful in the current proof. This is because one can think of the score function to be the Laplace transform of the random walk endpoint. In particular, for level $h$, if we define $N_h$ to be indicator for edge not favored, then $N_1, N_2, \cdots N_H$ are all $\text{Bernoulli}(1/2)$ random variable as shown rigorously in the proof (and claimed in HL26). So if we define $M_h = 1 - 2N_h$, then its summation over all $h$ is a symmetric random walk. Such a clean representation would not be possible with any arity if it is not a power of two. 

    These problems are easier to understand if we consider $3$-ary tree. In the case when we use $2^b$-ary tree (for $b \in \mathbb N$), then the cross-parity rule is symmetric. The subtle issue is that we do not probe the child that contains the hidden index because there in that case translation equivariance would fail as $g_J$ is not constant in that child. Now suppose the three childs are $u_L, u_M, u_R$, then for $u_M$ both the probes are ramp free and we have clean potential argument. However, for either $u_R$ and $u_L$, this would not be the case because they contain the ramp.
    
    In short, it is an interesting question whether $4$-ary tree is the minimum or we can show similar results for other tree structure. We know that a different proof techniques work even with the use of binary tree. 
    
    
    \item \textbf{Cross parity.}
    The signal is not constant on $I_J$.  So, if we directly probe the target child, it would expose the ramp, and translation equivariance would no longer reduce the comparison to the normalized error $Z$. One of the main properties of cross-parity is that it removes this issue by guaranteeing that both probed intervals avoid the target child.  This is the unique point where the residual decomposition, tree geometry, and translation equivariance interact, and hence it became clear to us only after writing all these results explicitly. We have also communicated this to the authors.

    \item \textbf{Exponential score.} 
    A natural question is why should one use exponential score. If we use any other score, it is unclear if it can properly account for the exponential blow-up in the group privacy bound.
    For example, if we use linear score, then its  expectation is of order $H$, and group privacy multiplies that expectation by $e^{k\eps}$, which is too costly for logarithmic $k$. The exponential score has reference expectation $((1+e^{-\tau})/2)^H$, which decays exponentially in $\tau H$.  Therefore, $\tau=u/H$ makes this decay dominate the privacy cost while preserving a Jensen lower bound in terms of local mistakes. If fact, if the degree of the polynomial does not grow with $H,$ then the score function remains polynomial and the same issue arises! In short, a score the proof requires a score function whose expectation decays exponentially and the exponential score provides an especially convenient choice. However, whether or not this is the only choice is an interesting question.
    
    \item \textbf{Probe-potential pair}
    The decoder is common to both error metrics, and the only how we account (using the probe-potential function pair) for the mistake changes.  In particular, nested midranges are controlled by lost range, while the law of total variance exactly controls differences of child means (these are the Lemma 5 and 6 in the preprint of HL26).  This modularity is what allows one hard distribution and one privacy argument to give both MaxSE and MeanSE lower bounds. We believe that equally naturally defined error metric would work for all $\ell_p$ norm error metric.

\end{enumerate}

\section*{Acknowledgement}
The project was supported in part by NSF CNS 2433628, a Google Seed Fund grant, a Google Research Scholar Award, a Dean Research Seed Fund award, and a Decanal Research Grant. Part of the work was completed during the visit to the Institute of Science and Technology Austria. 

The author would like to thank Konstantina Bairaktari for explaining their work~\cite{bairaktari2026binary}, the authors of HL26 for engaging with us on the initial draft of this note. We appreciate the support from several other researchers who prefer to remain anonymous and who sent us private note with their comments and suggesting we make the note public. The author would like to thank the flight crew of Air Austria for giving a constant supply of coffee during a 9 hour flight!

\bibliographystyle{plain}
\bibliography{privacy}

\end{document}